\newcommand{\bl}[1]{\textcolor{blue}{#1}}
\newcommand{\red}[1]{\textcolor{red}{#1}}
\definecolor{mypurple}{rgb}{.4,.0,.5}
\def\s{{\bf s}}
\def\y{{\bf y}}
\def\x{{\bf x}}
\def\x{{\mathbf x}}
\def\s{{\bf s}}
\def\x{{\bf x}}
\def\y{{\bf y}}
\def\z{{\bf z}}
\def\b{{\bf b}}
\def\d{{\bf d}}
\def\f{{\bf f}}
\def\tr{\mbox{Tr}}
\def\tr{{\rm tr}\,}
\def\cS{{\mathcal S}}
\def\be{\begin{equation}}
\def\ee{\end{equation}}
\def\ba{\left[\begin{array}}
\def\ea{\end{array}\right]}
\def\s{{\bf s}}
\def\x{{\bf x}}
\def\y{{\bf y}}
\def\z{{\bf z}}
\def\b{{\bf b}}
\def\d{{\bf d}}
\def\f{{\bf f}}
\def\1{{\bf 1}}
\def\g{{\bf g}}
\def\0{{\bf 0}}
\def\mR{{\mathbb R}}
\def\mE{{\mathbb E}}
\def\mP{{\mathbb P}}
\def\lp{\left (}
\def\rp{\right )}
\def\s{{\bf s}}
\def\y{{\bf y}}
\def\x{{\bf x}}
\def\x{{\mathbf x}}
\def\s{{\bf s}}
\def\x{{\bf x}}
\def\y{{\bf y}}
\def\z{{\bf z}}
\def\b{{\bf b}}
\def\d{{\bf d}}
\def\f{{\bf f}}
\def\tr{\mbox{Tr}}
\def\tr{{\rm tr}\,}
\def\be{\begin{equation}}
\def\ee{\end{equation}}
\def\ba{\left[\begin{array}}
\def\ea{\end{array}\right]}
\def\s{{\bf s}}
\def\x{{\bf x}}
\def\y{{\bf y}}
\def\z{{\bf z}}
\def\b{{\bf b}}
\def\d{{\bf d}}
\def\f{{\bf f}}
\def\({\left (}
\def\){\right )}
\def\1{{\bf 1}}
\def\g{{\bf g}}
\def\0{{\bf 0}}
\definecolor{darkgreen}{rgb}{0, 0.4,0}
\newcommand{\dgr}[1]{\textcolor{darkgreen}{#1}}
\definecolor{purplebrown}{rgb}{0.5,0.1,0.6}
\definecolor{ultclupcol}{rgb}{0.1,0.5,0.5}
\definecolor{mytrycolor}{rgb}{0.5,0.7,0.2}
\definecolor{ultclupcola}{rgb}{.5,0,.5}
\definecolor{shadebrown}{rgb}{0.1,0.1,0.9}
\definecolor{lightblue}{rgb}{0.2,0,1}
\newtcbox{\xmybox}{on line,
arc=7pt,
before upper={\rule[-3pt]{0pt}{10pt}},boxrule=0pt,
boxsep=0pt,left=6pt,right=6pt,top=0pt,bottom=0pt,enhanced, coltext=blue, colback=white!10!yellow}
\newtcbox{\xmyboxa}{on line,
arc=7pt,
before upper={\rule[-3pt]{0pt}{10pt}},boxrule=0pt,
boxsep=0pt,left=6pt,right=6pt,top=0pt,bottom=0pt,enhanced, colback=white!10!yellow}
\newtcbox{\xmyboxb}{on line,
arc=7pt,
before upper={\rule[-3pt]{0pt}{10pt}},boxrule=1pt,colframe=darkgreen!100!blue,
boxsep=0pt,left=6pt,right=6pt,top=0pt,bottom=0pt,enhanced, colback=white!10!yellow}
\newtcbox{\xmyboxc}{on line,
arc=7pt,
before upper={\rule[-3pt]{0pt}{10pt}},boxrule=.7pt,colframe=blue!100!blue,
boxsep=0pt,left=6pt,right=6pt,top=0pt,bottom=0pt,enhanced, coltext=blue, colback=white!10!yellow}
\newtcbox{\xmytboxa}{on line,
arc=7pt,
before upper={\rule[-3pt]{0pt}{10pt}},boxrule=.0pt,colframe=pink!50!yellow,
boxsep=0pt,left=6pt,right=6pt,top=0pt,bottom=0pt,enhanced, coltext=white, colback=blue!40!red}
\newtcbox{\xmytboxb}{on line,
arc=7pt,
before upper={\rule[-3pt]{0pt}{10pt}},boxrule=.0pt,colframe=pink!50!yellow,
boxsep=0pt,left=6pt,right=6pt,top=0pt,bottom=0pt,enhanced, coltext=white, colback=white!40!green}
\newcommand\subsubsubsection{\@startsection{paragraph}{4}{\z@}{-2.5ex\@plus -1ex \@minus -.25ex}{1.25ex \@plus .25ex}{\normalfont\normalsize\bfseries}}
\newcommand\subsubsubsubsection{\@startsection{subparagraph}{5}{\z@}{-2.5ex\@plus -1ex \@minus -.25ex}{1.25ex \@plus .25ex}{\normalfont\normalsize\bfseries}}
\newtheorem{theorem}{Theorem}
\newtheorem{lemma}{Lemma}
\begin{document}

\begin{singlespace}

\title {Capacity of the treelike sign perceptrons neural networks with one hidden layer -- RDT based upper bounds%A tight variant of Gordon's escape through a mesh theorem
%\footnote{ This work was supported in
%part.}
}
\author{
\textsc{Mihailo Stojnic
\footnote{e-mail: {\tt flatoyer@gmail.com}} }}
\date{}
\maketitle

%%%%%%%%%%%%%%%%%%%%%%%%%%%%%%%%%%%%%%%%%%%%%%%%%%%%%%%%%%%%%%%%%%%%%%%%%%%%%%%%
%%%%%%%%%%%%%%%%%%%%%%%%%%%%%%%%%%%%%%%%%%%%%%%%%%%%%%%%%%%%%%%%%%%%%%%%%%%%%%%%
\centerline{{\bf Abstract}} \vspace*{0.1in}
%%%%%%%%%%%%%%%%%%%%%%%%%%%%%%%%%%%%%%%%%%%%%%%%%%%%%%%%%%%%%%%%%%%%%%%%%%%%%%%%
%%%%%%%%%%%%%%%%%%%%%%%%%%%%%%%%%%%%%%%%%%%%%%%%%%%%%%%%%%%%%%%%%%%%%%%%%%%%%%%%

We study the capacity of \emph{sign} perceptrons neural networks (SPNN) and particularly focus on 1-hidden layer \emph{treelike committee machine} (TCM) architectures. Similarly to what happens in the case of a single perceptron neuron, it turns out that, in a statistical sense, the capacity of a corresponding multilayered network architecture consisting of multiple \emph{sign} perceptrons also undergoes the so-called phase transition (PT) phenomenon. This means: (i) for certain range of system parameters (size of data, number of neurons), the network can be properly trained to accurately memorize \emph{all} elements of the input dataset; and (ii) outside the region such a training does not exist. Clearly, determining the corresponding phase transition curve that separates these regions is an extraordinary task and among the most fundamental questions related to the performance of any network. Utilizing powerful mathematical engine called Random Duality Theory (RDT), we establish a generic framework for determining the upper bounds on the 1-hidden layer TCM SPNN capacity. Moreover, we do so for \emph{any} given (odd) number of neurons. We further show that the obtained results \emph{exactly} match the replica symmetry predictions of \cite{EKTVZ92,BHS92}, thereby proving that the statistical physics based results are not only nice estimates but also mathematically rigorous bounds as well. Moreover, for $d\leq 5$, we obtain the capacity values that improve on the best known rigorous ones of \cite{MitchDurb89}, thereby establishing a first, mathematically rigorous, progress in well over 30 years.

%Somewhat surprisingly and despite heavy underlying mathematical machinery, the final results have relatively easily accessible forms. Consequently, they can be used as a practical benchmark tool for a quick assessment of network architecture capabilities by both engineering practitioners and academicians.

\vspace*{0.25in} \noindent {\bf Index Terms: Multi-layer neural networks; Capacity; Random duality theory}.

\end{singlespace}

%%%%%%%%%%%%%%%%%%%%%%%%%%%%%%%%%%%%%%%%%%%%%%%%%%%%%%%%%%%%%%%%%%%%%%%%%%%%%%%%%%%%%%%%%%%%%%%%
%%%%%%%%%%%%%%%%%%%%%%%%%%%%%%%%%%%%%%%%%%%%%%%%%%%%%%%%%%%%%%%%%%%%%%%%%%%%%%%%%%%%%%%%%%%%%%%%
%%%%%%%%%%%%%%%%%%%%%%%%%%%%%%%%%%%%%%%%%%%%%%%%%%%%%%%%%%%%%%%%%%%%%%%%%%%%%%%%%%%%%%%%%%%%%%%%
\section{Introduction}
\label{sec:intro}
%%%%%%%%%%%%%%%%%%%%%%%%%%%%%%%%%%%%%%%%%%%%%%%%%%%%%%%%%%%%%%%%%%%%%%%%%%%%%%%%%%%%%%%%%%%%%%%%
%%%%%%%%%%%%%%%%%%%%%%%%%%%%%%%%%%%%%%%%%%%%%%%%%%%%%%%%%%%%%%%%%%%%%%%%%%%%%%%%%%%%%%%%%%%%%%%%
%%%%%%%%%%%%%%%%%%%%%%%%%%%%%%%%%%%%%%%%%%%%%%%%%%%%%%%%%%%%%%%%%%%%%%%%%%%%%%%%%%%%%%%%%%%%%%%%

As is well known, neural networks (NN) are among the most powerful algorithmic and engineering tools often used to handle many, otherwise intractable,  problems. Growing demand for understanding large datasets has skyrocketed the NN relevance within the last decade. Clearly, the practical algorithmic aspects including designs and various NN applications lead the way. A corresponding trend of high demand for the adequate theoretical justifications has been created as well. We in this paper follow such a trend and provide a strong theoretical support for \emph{precisely} determining the so-called network's \emph{memory capacity} --  one of the key fundamental features of any NN.

%%%%%%%%%%%%%%%%%%%%%%%%%%%%%%%%%%%%%%%%%%%%%%%%%%%%%%%%%%%%%%%%%%%%%%%%%%%%%%%%%%%%%%%%%%%%%%%%
%%%%%%%%%%%%%%%%%%%%%%%%%%%%%%%%%%%%%%%%%%%%%%%%%%%%%%%%%%%%%%%%%%%%%%%%%%%%%%%%%%%%%%%%%%%%%%%%
%%%%%%%%%%%%%%%%%%%%%%%%%%%%%%%%%%%%%%%%%%%%%%%%%%%%%%%%%%%%%%%%%%%%%%%%%%%%%%%%%%%%%%%%%%%%%%%%
\subsection{Model, mathematical setup, and relevant prior work}
\label{sec:model}
%%%%%%%%%%%%%%%%%%%%%%%%%%%%%%%%%%%%%%%%%%%%%%%%%%%%%%%%%%%%%%%%%%%%%%%%%%%%%%%%%%%%%%%%%%%%%%%%
%%%%%%%%%%%%%%%%%%%%%%%%%%%%%%%%%%%%%%%%%%%%%%%%%%%%%%%%%%%%%%%%%%%%%%%%%%%%%%%%%%%%%%%%%%%%%%%%
%%%%%%%%%%%%%%%%%%%%%%%%%%%%%%%%%%%%%%%%%%%%%%%%%%%%%%%%%%%%%%%%%%%%%%%%%%%%%%%%%%%%%%%%%%%%%%%%

We start by considering a general setup with (potentially) fully connected multilayered multi-input single-output feed-forward neural net with the number of inputs $n$ and the number of hidden layers $L-2$. Deviating slightly from the common practice and for the notational convenience, we view the inputs as the nodes of the first layer and the output as the node of the last layer. This also implies that $n$ is the number of the neurons in the first  and $1$ is the number of the nodes in the last layer (the first and/or the last layer can then be viewed through the prism of the existing literature terminology as artificial layers). Also we denote by $d_i$ the number of the nodes in layer $i$, $1\leq i\leq L$ and, clearly, set $d_1\triangleq n$ and $d_{L}=1$. The nodes from layers $i$ and $i+1$ are linearly connected by weights $W^{(i)}\in\mR^{d_{i}\times d_{i+1}}$ (to ensure the full generality, we initially assume that matrices $W^{(i)}$ are full; below, we specialize to particular cases). The network effectively operates based on the following simple mathematical formalism:
\begin{equation}\label{eq:model1}
  \x^{(i+1)}=\f^{(i)}(W^{(i)}\x^{(i)}-\b^{(i)}),
\end{equation}
where $\x^{(i)}\in\mR^{d_i}$ and $\x^{(i+1)}\in\mR^{d_{i+1}}$ are respectively the input and output vectors of the neurons in layer $i$, $\b^{(i)}\in\mR^{d_{i+1}}$ are the vectors of the so-called neuron threshold values, and
$\f^{(i)}(\cdot)=[\f_1^{(i)}(\cdot),\f_2^{(i)}(\cdot),\dots,\f_{d_{i+1}}^{(i)}(\cdot)]^T$ are $d_{i+1}\times 1$ column vectors of functions $\f_{j}^{(i)}(\cdot):\mR^{d_i}\rightarrow \mR$ that describe how $d_{i+1}$ neurons in layer $(i+1)$ operate. We also follow the standard conventions and define network's input and output as:
\begin{equation}\label{eq:model2}
  \mbox{\textbf{network input:}} \triangleq \x^{(1)}  \qquad    \mbox{\textbf{network output:}} \triangleq \x^{(L+1)}.
\end{equation}
We refer to the above architecture as $A(\d;\f^{(i)})$ where $\d=[d_1,d_2,\dots,d_{L}]$ (or for short $A(\d;\f)$ when all the neurons after the first layer have identical functions $\f$).

As we have already mentioned, the above is the general fully connected network architecture. Of our particular interest is the \emph{treelike} one as well. In such an architecture, one has matrices $W^{(i)}$ as sparse objects with a particular type of sparsity. Namely, assuming that $\delta_i=\frac{d_i}{d_{i+1}}$ is a positive  integer, the nonzero elements of the $j$-th (for $j\in\{1,2,\dots,d_i\}$) row of $W^{(i)}$ are restricted to columns $\{(j-1)\delta_i+1,(j-1)\delta_i+2,\dots,j\delta_i\}$.

\vspace{.1in}
\noindent \textbf{Memory capacity:} Any neuron has, as one of its most fundamental properties, ability to properly characterize certain features of a given data set. For example, assume that one is given a collection of images showing various sporting events ranging from basketball, football, and  soccer to tennis, golf, and chess. One can then wonder if an NN architecture can properly memorize which image shows which of the sports. The answer is yes (in fact, not only can a properly trained NN do so, but a single neuron can as well). To see how this question fits into the above mathematical NN frame, let us assume that we are given $m$ data pairs $(\x^{(0,k)},\y^{(0,k)})$, $k\in\{1,2,\dots,m\}$. For each $\x^{(0,k)}\in \mR^{n}$ that represents an $n$-dimensional image, $\y^{(0,k)}\in\mR$ represents its a corresponding label that explains which sport the image shows. One then observes that determining weights $W^{(i)}$ in (\ref{eq:model1}) such that
\begin{equation}\label{eq:model3}
\x^{(1)}=\x^{(0,k)}\quad \Longrightarrow \quad \x^{(L+1)}=\y^{(0,k)} \qquad \forall k,
\end{equation}
is sufficient to properly memorize the given data pairs. While the above NN concept has the ability to memorize, it doesn't necessarily mean that it can do so for any data set. As intuitively expected, it typically turns out that the properties of the underlying data set, and in particular its size, play a key role in distinguishing between when the memorization can and cannot be done. In fact, the so-called \emph{memory capacity}, $C(A)$, of a given neural architecture is exactly the largest $m$ such that (\ref{eq:model3}) holds for any collection of data pairs $(\x^{(0,k)},\y^{(0,k)})$, $k\in\{1,2,\dots,m\}$ with certain prescribed properties. Clearly, determining an, as precise as possible, value of the memory capacity, is of the fundamental importance when it comes to ensuring a proper network utilization. We below present a collection of results that directly relate to the memory capacity.

Before proceeding with the discussion of the main results, we find it useful to clearly state all the underlying assumptions that make the analysis easier to present.

\vspace{.1in}
\noindent \textbf{Structural (network architecture) assumptions:} \emph{\textbf{(i)}} We consider zero-thresholds ($\b^{(i)}=0$) \emph{committee machine} networks with one hidden layer. That means that, in the above generic setup, we have $L=3$, $W^{(1)}=I_{n\times n}$ and $W^{(3)}=\1_{d_{2}\times 1}^T$ (i.e. $W^{(3)}$ is a $d_2$-dimensional row vector of all ones). \textbf{\emph{(ii)}} For the very first layer we assume, as is done traditionally, $\f^{(1)}(W^{(1)}\x^{(1)})=\f^{(1)}(\x^{(1)})=\x^{(1)}$.  \emph{\textbf{(iii)}} For the second and third layer we consider the so-called \emph{sign} perceptrons (with zero thresholds), which implies $\f^{(i)}_j(W^{(i)}\x^{(i)}-\b^{(i)})=\mbox{sign} \left ( W^{(i)}\x^{(i)}\right )$. \emph{\textbf{(iv)}} Since the only changeable number of neurons is in the hidden layer, we define $d\triangleq d_2$ and $\delta\triangleq \delta_1=\frac{d_1}{d_2}=\frac{n}{d}$ and consequently have the following architecture $A([n,d,1];\mbox{sign})$. \emph{\textbf{(v)}} Finally, to ensure that the node in the last layer functions properly, we assume \emph{odd} values for $d$ (this can easily be modified through different definitions of $W^{(3)}$ to account for any (including even) $d$; however, the writing becomes more complicated and the overall presentation would lose on elegance). The above specified architecture for general full $W^{(i)}$'s corresponds to the so-called \emph{fully connected committee machines} (FCM). When $W^{(i)}$'s are with the above mentioned particular sparse structure, we have the so-called \emph{treelike committee machines} (TCM). Moreover, modulo the first layer, which is typically included for the completeness and conventional reasons (and as such it does not really affect the functioning of the network), the entire structure consists of \emph{sign} perceptrons. We refer to such structures as \emph{sign} perceptrons neural nets (SPNN). Depending on the type of $W^{(i)}$'s, we then have FCM or TCM SPNNs. As is well known, and as we will see later on, these objects belong to the class of \emph{discrete} neuronal functions and are very difficult to analytically handle.

\vspace{.1in}

\noindent \textbf{Technical (data related) assumptions:} \emph{\textbf{(i)}} As one expects based on the above choice of $\f$'s, we assume \emph{binary} labeling, i.e. we assume $\y_i^{(0,k)}\in\{-1,1\}$ (the entire analysis directly extends to any other form of $q$-ary labeling at the expense of writing and overall exposition being substantially more involved). \emph{\textbf{(ii)}} We assume separable data sets (for example, indistinguishable/contradictory pairs (or subgroups) like $(\x^{(0,k)},\y^{(0,k)})$ and $(\x^{(0,k)},-\y^{(0,k)})$ can not appear). \emph{\textbf{(iii)}} We assume statistical data sets and, for the easiness of the presentation, we take $\x^{(0,k)}$ as iid standard normals. While this follows the standard statistical trend from the classical single perceptron references \cite{DTbern,Gar88,StojnicGardGen13,Cover65,Winder,Winder61,Wendel62}, we should add that, when one is interested in providing capacity upper bounds that hold universally, it is sufficient to choose a particular type of the data set for which these bounds are valid. Of course, as shown in \cite{DTbern,Gar88,StojnicGardGen13,Cover65,Winder,Winder61,Wendel62} for single perceptrons, and as we will see later on for the networks, for particular statistics the bar can be raised much higher and the capacity might potentially be determined on a very precise level.

\vspace{.1in}

\noindent \textbf{Relevant prior work:} Studying the \emph{sign} perceptron neurons has been around for as long as the pattern recognition and machine learning fields  have. In fact, when repositioned within the context of integral geometry, various related studies can be found in the works that stretch back even further (see, e.g. \cite{Schlafli,Cover65,Wendel62,Joseph60}). As is by now well known, the most prominent result related to the \emph{sign} spherical perceptron, is that its capacity grows linearly with the dimensionality of the data ambient space, $n$. Moreover, one can actually precisely determine the underlying linear proportionality as $C(A(1;\mbox{sign}))\rightarrow 2n$ as $n\rightarrow\infty$. After the initial considerations from  \cite{Schlafli,Cover65,Winder,Winder61,Wendel62,Cameron60,Joseph60}, this celebrated result has been rediscovered in various forms on a multitude of occasions in various different fields \cite{BalVen87,Ven86,DT,StojnicISIT2010binary,DonTan09Univ,DTbern} with particulary interesting being statistical physics considerations from \cite{Gar88} and their rigorous justifications from \cite{StojnicGardGen13,StojnicGardSphErr13}.

\underline{\emph{Multi-layer NN (FCM versus TCM):}} While all the above mentioned results pretty much completely clarified the way a single \emph{sign} perceptron functions, the picture is not remotely as clear when one moves to the corresponding multi-perceptron counterparts. The results related to the TCM architectures are particularly scarce. A little bit more is known regarding the behavior of the related FCM ones. The direct connection between the two is not apparent though. Beyond the obvious fact that the FCM capacities trivially upper-bound the corresponding TCM ones, one may also take (in particular for the 1-hidden layer architectures) that the TCM capacities are roughly speaking the FCM ones scaled by $d$. Since these connections are rather interesting, it is useful to recall on some of the known FCM results. To that end, let the total number of the free network parameters (weights) be $w=\sum_{i=1}^{L-1} d_id_{i+1}$. Then, probably, the most closely related result to our own is the VC-dimension \emph{qualitative} memory capacity upper bound $O(w\log(w))$ (for the 1-hidden layer NNs, one has that $w=d_1d_{2}+d_2=(n+1)d$ for FCM and $w=d_1+d_{2}=n+d$ for TCM which, for huge $n$ and large $d_i$'s, gives the above mentioned FCM versus TCM capacity scaling by $d$). Although not directly related, we mention a couple of results regarding the corresponding lower bounds. First in \cite{Baum88}, it was argued that for a shallow 3-layer network (similar to the one that we study in this paper) the capacity scales as $O(nd)$. In \cite{Vershynin20}, it was shown for networks with more than three layers that the capacity is (roughly speaking) at least $O(w)$. All of these results are of the qualitative/scaling type and as such are not directly related or comparable to the ones that we will present below. In particular, we analyze the capacity for \emph{any} given $d$ and do so on the phase transition level of precision, where the qualitative/scaling type of estimates are not allowed. Nonetheless, we included all the known scaling estimates since they provide a simple descriptive characterizations as to what one might expect from the network capacities.

\underline{\emph{Different activations:}} Among relatively simple traditionally employed neuronal functions, the \emph{sign} perceptrons (of our interest in this paper) are probably the hardest to analytically handle. The main reason is their discreteness. If one, for example, deviates a little bit and allows for their various well known continuous counterparts/relaxations (i.e., for $\f$'s being sigmoid, ReLU, tanh and so on) a little bit more is known. For example, for sigmoid, it was suggested for deep nets in \cite{Yama93} (and proven for 4-layer nets in \cite{GBHuang03}) that the capacity is at least $O(w)$. Similar results were shown in \cite{ZBHRV17,HardrtMa16} with an additional restriction on the number of nodes that was later on removed in \cite{YunSuJad19}.

\underline{\emph{Algorithmic achievability:}} When it comes to the continuous neuronal functions, a whole lot of great work has been done recently on equally important problem of actually approaching the capacity as closely as possible. After an overwhelming empirical evidence suggested that simple gradient based methods might be able to train the net rather well, the focus in recent years has been on showing that the so-called mild over-parametrization (moderately larger number of all free parameters, $w$, compared to the size of the memorizable data set, $m$) suffices to justify its performance. A subset of very interesting and successful approaches towards rigorously establishing these statements can be found in e.g. \cite{DuZhaiPoc18,GeWangZhao19,ADHLW19,JiTel19,LiLiang18,OymSol19,RuoyuSun19,SongYang19,ZCZG18}. A majority of these works treats FCMs but are extendable to TCMs as well.

\underline{\emph{Statistical physics (replica methods):}} Finally, excellent results have been obtained through the statistical physics replica methods. While these are mathematically non-rigorous, they are most closely related to our work in terms of both the studied network setup and the obtained performance predictions. In particular, \cite{EKTVZ92,BHS92} study the very same, treelike (non-overlapping weights) committee machine (TCM) architecture (as well as a directly related, fully connected (overlapping weights) one). They obtained the closed form replica symmetry based capacity predictions for any number, $d$, of the neurons in the hidden layer. Moreover, \cite{EKTVZ92,BHS92} proceeded even further and studied the corresponding scaling behavior for large $d$  and showed that the obtained bound violates the uniform bound extension of \cite{Cover65,Winder,Winder61,Wendel62} given in \cite{MitchDurb89}. As a remedy to such a contradiction, they further undertook studying of the first level of the replica symmetry breaking (rsb) and showed that it can lower the capacity. Corresponding large $d$ scaling rsb considerations were presented in \cite{MonZech95} for both the committee and the so-called parity machines (for more on earlier parity machines replica considerations, see,e.g., \cite{BarKan91,BHK90}). Also, for the fully connected committee machines architecture, a bit later, \cite{Urban97,XiongKwonOh97} obtained the large $d$ scaling that matches the upper-bounding one of \cite{MitchDurb89}. More recently, \cite{BalMalZech19} obtained the first level of rsb capacity predictions for the treelike architecture but with the ReLU activations and \cite{ZavPeh21} moved things further by obtaining similar predictions for a few additional activations, including ReLU, erf, quadratic, and linear.

\vspace{.1in}
\noindent \textbf{Our contributions:} We study, within the above mentioned statistical data context, the memory capacity of TCM SPNNs with one hidden layer, i.e. we study $C(A([n,d,1];\mbox{sign}))$. The following are the main takeaways: \emph{\textbf{(i)}} We attack the problem on the \emph{\textbf{phase transition}} level, which does not allow for \emph{qualitative/scaling} type of estimates and instead requires very \emph{precise} analysis of all underlying quantities. \emph{\textbf{(ii)}} Utilizing a powerful mathematical engine, called Random Duality Theory (RDT) (see, e.g. \cite{StojnicCSetam09,StojnicICASSP10var,StojnicRegRndDlt10}), we create a generic framework for the analysis of the memory capabilities of SPNNs.
\emph{\textbf{(iii)}} For \emph{any} (odd) given number of the nodes in the hidden layer, $d$, we provide an explicit characterization of an upper bound on $\lim_{n\rightarrow\infty}C(A([n,d,1];\mbox{sign}))$. \emph{\textbf{(iv)}} Somewhat surprisingly and despite heavy underlying mathematics, the final forms of the resulting characterizations are sufficiently elegant to even allow an efficient numerical evaluation of the exact proportionality constants that the analysis produces. The obtained results are shown in Table \ref{tab:tab1}. The numerical values are the upper bounds $\hat{c}(3,\mbox{sign})$ on the true values of the $n$-scaled capacity $c(d;\mbox{sign})$ defined as
\begin{equation}\label{eq:model4}
c(d;\mbox{sign})\triangleq\lim_{n\rightarrow\infty} \frac{C(A([n,d,1];\mbox{sign}))}{n}.
\end{equation}
\emph{\textbf{(v)}} The obtained results exactly match the replica symmetry predictions of \cite{EKTVZ92,BHS92} for any $d$, thereby establishing the statistical physics predictions as rigorous upper bounds (which is in agreement and as expected by the postulates of the RDT). More, importantly, for $d\leq 5$ our results improve on the best rigorously known ones of \cite{MitchDurb89}, thereby making the first mathematically rigorous progress in over 30 years in studying this, rather fundamental, neural network open problem.

 \begin{table}[h]
  \caption{\textbf{\bl{Theoretical estimates}} of the memory capacity upper bounds of 1-hidden layer TCM SPNN}
  \label{tab:tab1}
  \centering
  \begin{tabular}{ccccc}
    \hline\hline
  \textbf{Upper bound on} & \textbf{Reference}  & \multicolumn{3}{c}{$d$}                   \\
    \cline{3-5}
    $c(d;\mbox{sign})\triangleq\lim_{n\rightarrow\infty} \frac{C(A([n,d,1];\mbox{sign}))}{n}$  &  (methodology)   & $\mathbf{1}$   & $\mathbf{3}$     & $\mathbf{5}$ \\
    \hline\hline
   $ $  \hspace{.2in} $\hat{c}(d;\mbox{sign})$ \hspace{.2in} $ $ &  this paper \bl{\textbf{(RDT)}}& \bl{$\mathbf{2}$} & \bl{$\mathbf{4.025}$}  & \bl{$\mathbf{5.769}$}  \\
%    \hline
%    $\bar{c}(d;\mbox{sign})$ & $ $  \hspace{.2in}   sequel paper \prp{(\textbf{\emph{lifted} RDT)}} \hspace{.2in} $ $   & \prp{$\mathbf{2}$} & %\prp{$\mathbf{10.3}$}  & \prp{$\mathbf{20.2}$} & \prp{$\mathbf{30.1}$}     \\
    \hline
    $c_{RS}(d;\mbox{sign})$ & $ $  \hspace{.0in}   \cite{EKTVZ92,BHS92} \red{(\textbf{Replica symmetry)}} \hspace{.0in} $ $   & \red{$\mathbf{2}$} & \red{$\mathbf{4.025}$}  & \red{$\mathbf{5.769}$}      \\
     \hline
    $c_{CG}(d;\mbox{sign})$ & $ $  \hspace{.0in}   \cite{MitchDurb89} \dgr{(\textbf{Combinatorial geometry)}}  \hspace{.0in} $ $   & \dgr{$\mathbf{2}$} & \dgr{$\mathbf{5.421}$}  & \dgr{$\mathbf{6.425}$}  \\
       \hline\hline
  \end{tabular}
\end{table}

%%%%%%%%%%%%%%%%%%%%%%%%%%%%%%%%%%%%%%%%%%%%%%%%%%%%%%%%%%%%%%%%%%%%%%%%%%%%%%%%%%%%%%%%%%%%%%%%
%%%%%%%%%%%%%%%%%%%%%%%%%%%%%%%%%%%%%%%%%%%%%%%%%%%%%%%%%%%%%%%%%%%%%%%%%%%%%%%%%%%%%%%%%%%%%%%%
%%%%%%%%%%%%%%%%%%%%%%%%%%%%%%%%%%%%%%%%%%%%%%%%%%%%%%%%%%%%%%%%%%%%%%%%%%%%%%%%%%%%%%%%%%%%%%%%
\section{Technical analysis}
\label{sec:analysis}
%%%%%%%%%%%%%%%%%%%%%%%%%%%%%%%%%%%%%%%%%%%%%%%%%%%%%%%%%%%%%%%%%%%%%%%%%%%%%%%%%%%%%%%%%%%%%%%%
%%%%%%%%%%%%%%%%%%%%%%%%%%%%%%%%%%%%%%%%%%%%%%%%%%%%%%%%%%%%%%%%%%%%%%%%%%%%%%%%%%%%%%%%%%%%%%%%
%%%%%%%%%%%%%%%%%%%%%%%%%%%%%%%%%%%%%%%%%%%%%%%%%%%%%%%%%%%%%%%%%%%%%%%%%%%%%%%%%%%%%%%%%%%%%%%%

We start things off by putting everything on the right mathematical track. To that end we first set $W\triangleq W^{(2)}$ and after recalling that $W^{(1)}=I$ and $W^{(3)}=\1_{d\times 1}^T=\1^T$ (where we drop the subscript in $\1$ to ease the writing), we have for any $k\in\{1,2,\dots,m\}$
\begin{equation}\label{eq:ta1}
\x^{(1)}=\x^{(0,k)}  \quad \Longrightarrow \quad  \x^{(2)}=\f^{(1)}(W^{(1)}\x^{(1)})=\f^{(1)}(\x^{(1)})=\x^{(1)}=\x^{(0,k)},
\end{equation}
and
\begin{equation}\label{eq:ta2}
\x^{(2)}=\x^{(0,k)}  \quad \Longrightarrow \quad  \x^{(3)}=\f^{(2)}(W^{(2)}\x^{(2)})=\mbox{sign}(W\x^{(0,k)}),
\end{equation}
and
\begin{equation}\label{eq:ta3}
 \x^{(3)}=\mbox{sign}(W^{(2)}\x^{(0,k)}) \quad \Longrightarrow \quad  \x^{(4)}=\f^{(3)}(W^{(3)}\x^{(3)})=\mbox{sign}(\1^T\mbox{sign}(W\x^{(0,k)})).
\end{equation}
Connecting beginning in (\ref{eq:ta1}) and end in (\ref{eq:ta3}), we have an explicit, closed-form relation between the input and the output of the network
\begin{equation}\label{eq:ta4}
\x^{(1)}=\x^{(0,k)} \quad \Longrightarrow \quad  \x^{(4)}=\mbox{sign}(\1^T\mbox{sign}(W\x^{(0,k)})).
\end{equation}
The following condition is then both necessary and sufficient for the network to operate properly
\begin{equation}\label{eq:ta5}
\y^{(0,k)}=\mbox{sign}(\1^T\mbox{sign}(W\x^{(0,k)})).
\end{equation}
After setting
\begin{equation}\label{eq:ta6}
\y\triangleq \begin{bmatrix}
               \y^{(0,1)} & \y^{(0,2)} & \dots & \y^{(0,m)}
            \end{bmatrix}^T   \qquad \mbox{and} \qquad X\triangleq \begin{bmatrix}
               \x^{(0,1)} & \x^{(0,2)} & \dots & \x^{(0,m)}
            \end{bmatrix}^T,
\end{equation}
one can rewrite (\ref{eq:ta5}) as
\begin{equation}\label{eq:ta7}
\left (\exists W\in\mR^{d\times n}| \|\y^T-\mbox{sign}(\1^T\mbox{sign}(WX^T))\|_2=0 \right )  \quad \Longleftrightarrow \quad \left ( \left ( X,\y \right ) \mbox{is memorized} \right ),
\end{equation}
and the $k$-th data pair $\left ( \x^{(0,k)},\y^{(0,k)} \right )$ are the $k$-th rows of $m\times n$ matrix $X$ and $m\times 1$ column vector $\y$. We can further alternatively write (\ref{eq:ta7}) as the following
\begin{center}
 	\tcbset{beamer,sidebyside,lower separated=false, fonttitle=\bfseries, coltext=black,
		interior style={top color=yellow!20!white, bottom color=yellow!60!white},title style={left color=black, right color=red!50!blue!60!white},
		width=(\linewidth-4pt)/4,before=,after=\hfill,fonttitle=\bfseries,equal height group=AT}
%	\tcbset{colback=yellow!95!white,colframe=blue!95!white,width=(\linewidth-4pt)/4, equal height group=AT,before=,after=\hfill,fonttitle=\bfseries}
 	\begin{tcolorbox}[title=Algebraic memorization characterization:,sidebyside,width=1\linewidth]
\vspace{-.15in}\begin{eqnarray}\label{eq:ta8}
\hspace{-.3in} 0=\xi\triangleq \min_{W,Q} & & \hspace{-.1in}\|\y-\mbox{sign}(\mbox{sign}(Q) \1)\|_2 \nonumber \\
\hspace{-.5in} \mbox{subject to} & & \hspace{-.1in}XW^T=Q
\end{eqnarray}
 \tcblower
 \hspace{-.2in}$\Longleftrightarrow$ \hspace{.1in} Data set $\left (X,\y \right )$ is properly memorized.
 		\vspace{-.0in}
 	\end{tcolorbox}
\end{center}
Since the above optimization problem is the key on the path towards ensuring proper network data memorization, we below analyze it in more detail.

%%%%%%%%%%%%%%%%%%%%%%%%%%%%%%%%%%%%%%%%%%%%%%%%%%%%%%%%%%%%%%%%%%%%%%%%%%%%%%%%%%%%%%%%%%%%%%%%
%%%%%%%%%%%%%%%%%%%%%%%%%%%%%%%%%%%%%%%%%%%%%%%%%%%%%%%%%%%%%%%%%%%%%%%%%%%%%%%%%%%%%%%%%%%%%%%%
%%%%%%%%%%%%%%%%%%%%%%%%%%%%%%%%%%%%%%%%%%%%%%%%%%%%%%%%%%%%%%%%%%%%%%%%%%%%%%%%%%%%%%%%%%%%%%%%
\subsection{Upper-bounding capacity via Random Duality Theory (RDT)}
\label{sec:ubrdt}
%%%%%%%%%%%%%%%%%%%%%%%%%%%%%%%%%%%%%%%%%%%%%%%%%%%%%%%%%%%%%%%%%%%%%%%%%%%%%%%%%%%%%%%%%%%%%%%%
%%%%%%%%%%%%%%%%%%%%%%%%%%%%%%%%%%%%%%%%%%%%%%%%%%%%%%%%%%%%%%%%%%%%%%%%%%%%%%%%%%%%%%%%%%%%%%%%
%%%%%%%%%%%%%%%%%%%%%%%%%%%%%%%%%%%%%%%%%%%%%%%%%%%%%%%%%%%%%%%%%%%%%%%%%%%%%%%%%%%%%%%%%%%%%%%%

As discussed in the introduction, we consider statistical data sets and assume that elements of $X$ are iid standard normals. Due to the consequential statistical rotational symmetry, we can then, analogously and without a loss of generality, assume that the elements of $\y$ are all equal to 1, i.e. that $\y=\1$. Then the above key optimization becomes
\begin{eqnarray}\label{eq:ta9}
\xi=\min_{Z,Q} & & \|\1-\mbox{sign}(\mbox{sign}(Q) \1)\|_2 \nonumber \\
 \mbox{subject to} & & XZ=Q,
\end{eqnarray}
where for the easiness of writing, we have introduced a cosmetic change $Z=W^T$. We consider the TCM architecture, with the sparse $Z$ such that the only nonzero elements of the $j$ column of $Z$ are in rows $\cS^{(j)}\triangleq\{(j-1)\delta+1,(j-1)\delta+2,\dots,j\delta\}$. Moreover, as the above problem is insensitive with respect to the
scaling of $Z$ or $Q$, it can be rewritten as
\begin{eqnarray}\label{eq:ta9a}
\xi=\min_{Z,Q} & & \|\1-\mbox{sign}(\mbox{sign}(Q) \1)\|_2 \nonumber \\
 \mbox{subject to} & & XZ=Q\nonumber \\
  & & \|Z_{:,j}\|_2=1 \nonumber \\
  & & \mbox{supp}(Z_{:,j})=\cS^{(j)}, 1\leq j\leq d,
\end{eqnarray}
where $\|Z_{:,j}\|_2$ is the norm of the $j$ column of $Z$. One can then trivially rewrite the above as
\begin{eqnarray}\label{eq:ta9aa0}
\xi=\min_{\z^{(j)},Q} & & \|\1-\mbox{sign}(\mbox{sign}(Q) \1)\|_2 \nonumber \\
 \mbox{subject to} & & X^{(j)}\z^{(j)}=Q_{:,j}, 1\leq j\leq d, \nonumber \\
  & & \|\z^{(j)}\|_2=1 \nonumber \\
  & & \z^{(j)}\in\mR^{\delta}, Q\in\mR^{m\times d},
\end{eqnarray}
where $X^{(j)}=X_{:,\cS^{(j)}}\in\mR^{m\times \delta}$. To attack the statistical optimization in (\ref{eq:ta9a}), we heavily rely on the powerful mathematical engine called Random Duality Theory (RDT) developed in a long series of work \cite{StojnicCSetam09,StojnicICASSP10var,StojnicCSetamBlock09,StojnicICASSP10block,StojnicRegRndDlt10}. Also, given the difficulty of the problem that we are facing, we assume a complete familiarity with the RDT and start by first summarizing its main principles and then continue by showing, step-by-step, how each of those principles applies to the problems of our interest here.

\vspace{-.0in}\begin{center}
%\tcbset{colback=orange!40!white!40!yellow,colframe=blue!75!black,fonttitle=\bfseries,title style={left color=black, right color=cyan},interior %style={left color=yellow!10!white,right color=yellow!80!white}}
 	\tcbset{beamer,lower separated=false, fonttitle=\bfseries, coltext=black ,
		interior style={top color=yellow!20!white, bottom color=yellow!60!white},title style={left color=black!80!purple!60!cyan, right color=yellow!80!white},
		width=(\linewidth-4pt)/4,before=,after=\hfill,fonttitle=\bfseries}
% 	\tcbset{beamer,lower separated=false, fonttitle=\bfseries, coltext=black ,
%		interior style={top color=yellow!20!white, bottom color=yellow!60!white},title style={left color=black, right color=red!50!blue!60!white},
%		width=(\linewidth-4pt)/4,before=,after=\hfill,fonttitle=\bfseries}
 \begin{tcolorbox}[beamer,title={\small Summary of the RDT's main principles} \cite{StojnicCSetam09,StojnicRegRndDlt10}, width=1\linewidth]
\vspace{-.15in}
{\small \begin{eqnarray*}
 \begin{array}{ll}
\hspace{-.19in} \mbox{1) \emph{Finding underlying optimization algebraic representation}}
 & \hspace{-.0in} \mbox{2) \emph{Determining the random dual}} \\
\hspace{-.19in} \mbox{3) \emph{Handling the random dual}} &
 \hspace{-.0in} \mbox{4) \emph{Double-checking strong random duality.}}
 \end{array}
  \end{eqnarray*}}
\vspace{-.2in}
 \end{tcolorbox}
\end{center}\vspace{-.0in}

To make the presentation look neat, we adopt the practice to formalize all the key results (including both simple to more complicated ones) as lemmas and theorems.

\vspace{.1in}

\noindent \underline{1) \textbf{\emph{Algebraic memorization characterization:}}}  The following lemma summarizes the above algebraic discussion which ultimately provided a convenient optimization representation of the underlying network memorization property.
\begin{lemma}(Algebraic optimization representation)
Assume a 1-hidden layer TCM SPNN with architecture $A([n,d,1];\mbox{sign})$. Any given data set $\left (\x^{(0,k)},1\right )_{k=1:m}$ can not be properly memorized by the network if
\begin{equation}\label{eq:ta10}
  f_{rp}(X)>0,
\end{equation}
where
\begin{equation}\label{eq:ta11}
f_{rp}(X)\triangleq \frac{1}{\sqrt{n}}\min_{\|\z^{(j)}\|_2=1,Q} \max_{\Lambda\in\mR^{m\times d}} \|\1-\emph{\mbox{sign}}(\emph{\mbox{sign}}(Q) \1)\|_2 +\sum_{j=1}^{d}(\Lambda_{:,j})^TX^{(j)}\z^{(j)} -\tr(\Lambda^TQ),
\end{equation}
and $X\triangleq \begin{bmatrix}
               \x^{(0,1)} & \x^{(0,2)} & \dots & \x^{(0,m)}
            \end{bmatrix}^T$.
  \label{lemma:lemma1}
\end{lemma}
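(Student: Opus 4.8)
The plan is to establish the equivalence in Lemma~\ref{lemma:lemma1} by recognizing that \eqref{eq:ta11} is nothing more than the Lagrangian reformulation of the constrained problem \eqref{eq:ta9aa0}. First I would recall that, as argued through \eqref{eq:ta1}--\eqref{eq:ta9aa0}, the data set $\left(\x^{(0,k)},1\right)_{k=1:m}$ is properly memorized if and only if the optimal value $\xi$ of \eqref{eq:ta9aa0} equals zero; since the objective $\|\1-\mbox{sign}(\mbox{sign}(Q)\1)\|_2$ is manifestly nonnegative, memorization fails precisely when $\xi>0$. So it suffices to show that $f_{rp}(X) = \frac{1}{\sqrt{n}}\xi$ (or at least that $f_{rp}(X)>0 \Rightarrow \xi>0$, which is the direction actually needed for the stated ``can not be memorized'' implication).

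The key step is to introduce Lagrange multipliers $\Lambda\in\mR^{m\times d}$ for the $d$ vector equality constraints $X^{(j)}\z^{(j)} = Q_{:,j}$, $1\leq j\leq d$, bundling them as the columns $\Lambda_{:,j}$ of $\Lambda$. For any fixed feasible $\z^{(j)}$ (with $\|\z^{(j)}\|_2=1$) and $Q$, the inner maximization over the unconstrained $\Lambda$ of
\[
\|\1-\mbox{sign}(\mbox{sign}(Q)\1)\|_2 + \sum_{j=1}^{d}(\Lambda_{:,j})^T\left(X^{(j)}\z^{(j)} - Q_{:,j}\right)
\]
equals $\|\1-\mbox{sign}(\mbox{sign}(Q)\1)\|_2$ when all constraints $X^{(j)}\z^{(j)} - Q_{:,j}=0$ hold, and $+\infty$ otherwise; note that $\sum_j (\Lambda_{:,j})^T Q_{:,j} = \tr(\Lambda^T Q)$, which matches the last term in \eqref{eq:ta11}. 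Hence taking the outer minimum over $\|\z^{(j)}\|_2=1$ and $Q$ recovers exactly the constrained minimum \eqref{eq:ta9aa0}, up to the global normalization $1/\sqrt{n}$. Since $1/\sqrt{n}>0$, the sign of $f_{rp}(X)$ matches the sign of $\xi$, giving \eqref{eq:ta10} $\Rightarrow$ non-memorizability. The support constraints $\mbox{supp}(Z_{:,j})=\cS^{(j)}$ are already absorbed into the treelike block structure by writing everything in terms of $\z^{(j)}\in\mR^\delta$ and $X^{(j)}=X_{:,\cS^{(j)}}$, so they need no separate multiplier.

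There is essentially no hard obstacle here: this lemma is a bookkeeping/reformulation step, and the only mild subtlety worth a sentence is the exchange of $\min$ and $\max$ --- but for the claimed direction we do not even need strong duality, since we only use that for feasible points the inner $\max$ returns the original objective, so the stated $f_{rp}(X)=\xi/\sqrt{n}$ (and a fortiori the implication \eqref{eq:ta10}) holds as an identity of optimal values without any duality gap concern. I would also note in passing that keeping $\Lambda$ fully free (rather than restricting it) is exactly what makes the indicator-of-feasibility trick work and is what sets up the later RDT steps (``determining the random dual''), where the Gaussianity of $X$ will be exploited on this same Lagrangian form.
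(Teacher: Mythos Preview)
Your proposal is correct and follows essentially the same approach as the paper: both recognize $f_{rp}(X)$ as the Lagrangian reformulation of (\ref{eq:ta9aa0}) with multipliers $\Lambda_{:,j}$ for the constraints $X^{(j)}\z^{(j)}=Q_{:,j}$, and conclude $f_{rp}(X)=\xi/\sqrt{n}$ so that $f_{rp}(X)>0$ forces non-memorizability. Your write-up is in fact more explicit than the paper's --- you spell out why the inner $\max$ acts as a feasibility indicator and correctly note that no strong duality is needed for this direction --- but the underlying argument is identical.
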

\begin{proof}
One first writes the Lagrangian of the optimization in (\ref{eq:ta9}) as
\begin{equation}\label{eq:ta12}
\mathcal{L}(Z,Q,\Lambda)=\|\1-\emph{\mbox{sign}}(\emph{\mbox{sign}}(Q) \1)\|_2 +\sum_{j=1}^{d}(\Lambda_{:,j})^TX^{(j)}\z^{(j)} -\tr(\Lambda^TQ),
\end{equation}
and then observes that
\begin{eqnarray}\label{eq:ta13}
\xi & \triangleq & \min_{\|\z^{(j)}\|_2=1,Q,X^{(j)}\z^{(j)}=Q_{:,j}} \|\1-\emph{\mbox{sign}}(\emph{\mbox{sign}}(Q) \1)\|_2
= \min_{\|\z^{(j)}\|_2=1,Q}
\max_{\Lambda}\mathcal{L}(Z,Q,\Lambda) \nonumber \\
 & = & \min_{\|\z^{(j)}\|_2=1,Q} \max_{\Lambda} \|\1-\emph{\mbox{sign}}(\emph{\mbox{sign}}(Q) \1)\|_2 +\sum_{j=1}^{d}(\Lambda_{:,j})^TX^{(j)}\z^{(j)} -\tr(\Lambda^TQ) \nonumber \\
 & \triangleq & f_{rp}(X).
\end{eqnarray}
Connecting beginning and end in (\ref{eq:ta13}) gives $\xi=f_{rp}(X)$. Based on (\ref{eq:ta8}), one then has that the data can be properly memorized if and only if $f_{rp}(X)=0$. This also means that as soon as $f_{rp}(X)>0$ the network can not properly memorize the data, which is exactly what is claimed in the lemma.
\end{proof}

In what follows, we consider mathematically the most challenging, so-called \emph{linear}, regime with
\begin{equation}\label{eq:ta14}
  \alpha\triangleq \lim_{n\rightarrow\infty}\frac{m}{n}.
\end{equation}
While the above lemma holds for any given data set $\left (\x^{(0,k)},1\right )_{k=1:m}$, to analyze (\ref{eq:ta10}) and (\ref{eq:ta11}), the RDT proceeds by imposing a statistics  on $X$.

%%%%%%%%%%%%%%%%%%%%%%%%%%%%%%%%%%%%%%%%%%%%%%%%%%%%%%%%%%%%%%%%%
%%%%%%%%%%%%%%%%%%%%%%%%%%%%%%%%%%%%%%%%%%%%%%%%%%%%%%%%%%%%%%%%%
%\subsubsection{Determining the random dual}
%\label{sec:thranl2}
%%%%%%%%%%%%%%%%%%%%%%%%%%%%%%%%%%%%%%%%%%%%%%%%%%%%%%%%%%%%%%%%%
%%%%%%%%%%%%%%%%%%%%%%%%%%%%%%%%%%%%%%%%%%%%%%%%%%%%%%%%%%%%%%%%%

\vspace{.1in}
\noindent \underline{2) \textbf{\emph{Determining the random dual:}}} As is standard within the RDT, we also utilize the so-called concentration of measure property, which basically means that for any fixed $\epsilon >0$,  we have (see, e.g. \cite{StojnicCSetam09,StojnicRegRndDlt10,StojnicICASSP10var})
\begin{equation*}
\lim_{n\rightarrow\infty}\mP_X\left (\frac{|f_{rp}(X)-\mE_X(f_{rp}(X))|}{\mE_X(f_{rp}(X))}>\epsilon\right )\longrightarrow 0.\label{eq:ta15}
\end{equation*}
Another key ingredient of the RDT machinery is the following so-called random dual theorem.
\begin{theorem}(Memorization characterization via random dual) Let $d$ be any odd integer. Consider TCM SPNN with $d$ neurons in the hidden layer, and architecture  $A([n,d,1];\mbox{sign})$, and let the elements of $X\in\mR^{m\times n}$ , $G\in\mR^{m\times d}$, and $H\in\mR^{\delta\times d}$ be iid standard normals. Set
\vspace{-.0in}
\begin{eqnarray}
\phi(Q) & \triangleq & \|\1-\emph{\mbox{sign}}(\emph{\mbox{sign}}(Q) \1)\|_2\nonumber \\
 f_{rd}(G,H) & \triangleq &
 \frac{1}{\sqrt{n}}\min_{\phi(Q)=0,\|\z^{(j)}\|_2=1}\max_{\Lambda \in R^{m\times d},\|\Lambda\|_F=1}\lp \tr(\Lambda^T G)+\sum_{j=1}^{d}\|\Lambda_{:,j}\|_2(H_{:,j})^T\z^{(j)} -\tr(\Lambda^TQ) \rp \nonumber \\
 \phi_0 & \triangleq & \lim_{n\rightarrow\infty} \mE_{G,H}f_{rd}(G,H)  .\label{eq:ta16}
\vspace{-.0in}\end{eqnarray}
One then has \vspace{-.02in}
\begin{eqnarray}
\hspace{-.3in}(\phi_0  > 0)   &  \Longrightarrow  & \lp \lim_{n\rightarrow\infty}\mP_{X}(f_{rd}>0)\longrightarrow 1\rp
\quad  \Longrightarrow \quad \lp \lim_{n\rightarrow\infty}\mP_{X}(f_{rp}>0)\longrightarrow 1 \rp  \nonumber \\
& \Longrightarrow & \lp \lim_{n\rightarrow\infty}\mP_{X}(A([n,d,1];\mbox{sign}) \quad \mbox{fails to memorize data set} \quad (X,\1))\longrightarrow 1\rp.\label{eq:ta17}
\end{eqnarray}
\label{thm:thm1}
\end{theorem}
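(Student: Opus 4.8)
The plan is to invoke the standard Random Duality Theory machinery, following the template established in \cite{StojnicCSetam09,StojnicRegRndDlt10}. The starting point is the algebraic representation from Lemma \ref{lemma:lemma1}: the quantity $f_{rp}(X)$ is a min-max over $(\z^{(j)},Q)$ and $\Lambda$, where the inner objective is bilinear in $X$ through the terms $\sum_j (\Lambda_{:,j})^T X^{(j)}\z^{(j)}$ and $\tr(\Lambda^T Q)$. The key observation that makes RDT applicable is that, after restricting $\Lambda$ to the Frobenius sphere $\|\Lambda\|_F=1$ (which is permissible because the max over an unconstrained $\Lambda$ is either $0$ or $+\infty$, and only the sign of $f_{rp}$ matters), the $X$-dependent part is exactly a Gaussian process indexed by the pair of unit-type vectors $(\z,\Lambda)$. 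First I would make this reduction explicit and record that $f_{rp}(X)>0$ is equivalent to the positivity of the sphere-constrained min-max.

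Next I would apply the Gordon-type comparison inequality (the ``random dual'' step). The process $\sum_j (\Lambda_{:,j})^T X^{(j)}\z^{(j)}$ decouples across the $d$ treelike blocks since the supports $\cS^{(j)}$ are disjoint, so $X^{(j)}$ are independent Gaussian blocks; each block contributes a bilinear form $(\Lambda_{:,j})^T X^{(j)} \z^{(j)}$ in the $m\times\delta$ Gaussian matrix $X^{(j)}$. For such bilinear Gaussian forms, the Gordon comparison (or the relevant sign-of-objective version used throughout the RDT papers) replaces $X^{(j)}$ by the simpler process $\|\Lambda_{:,j}\|_2 (H_{:,j})^T\z^{(j)} + \|\z^{(j)}\|_2 \g_j^T \Lambda_{:,j}$ for independent Gaussians; using $\|\z^{(j)}\|_2=1$ and collecting the $\g_j$ columns into $G$ gives precisely the $\tr(\Lambda^T G)+\sum_j \|\Lambda_{:,j}\|_2 (H_{:,j})^T\z^{(j)}$ appearing in $f_{rd}$. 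The direction of the inequality is the one that yields a one-sided probabilistic bound: $\mP_X(f_{rp}>0)$ is lower-bounded (asymptotically) in terms of $\mP_{G,H}(f_{rd}>0)$, which is exactly the first implication chain in \eqref{eq:ta17}.

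Then I would use concentration of measure, already quoted in the excerpt for $f_{rp}$ and holding identically for $f_{rd}$ (both are Lipschitz functionals of Gaussian vectors on the sphere-constrained domains). Concentration upgrades ``$\phi_0 = \lim \mE_{G,H} f_{rd} > 0$'' to ``$\lim_{n\to\infty}\mP_{G,H}(f_{rd}>0)\to 1$'', which combined with the comparison inequality gives $\lim_{n\to\infty}\mP_X(f_{rp}>0)\to 1$. Finally, Lemma \ref{lemma:lemma1} states that $f_{rp}(X)>0$ implies the dataset cannot be memorized, so the last implication in \eqref{eq:ta17} is immediate, and the event probabilities are identical. I would also note the ``double-checking strong random duality'' caveat: here we only need the one-sided (upper bound on capacity) direction, so we do not need reverse Gordon / strong duality; the weak direction suffices and is what the theorem claims.

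The main obstacle I expect is the careful handling of the comparison step in the presence of the nonconvex, discrete constraint $\phi(Q)=0$ (equivalently $\mbox{sign}(\mbox{sign}(Q)\1)=\1$, i.e. a strict majority of the $Q_{:,i}$ entries in each row is positive). This constraint set is not convex and not even connected in a simple way, so one must make sure the Gordon comparison is applied at the level of the Gaussian bilinear terms only, with the $Q$-optimization (and the $\tr(\Lambda^T Q)$ coupling) left intact on both sides — the feasible set for $Q$ is deterministic and identical in $f_{rp}$ and $f_{rd}$, so the comparison goes through, but this needs to be stated cleanly. A secondary technical point is justifying that the sphere restriction $\|\Lambda\|_F=1$ does not change the sign of the optimum and that the $\min$ over the feasible (nonempty, since $d$ is odd so $Q=\1\1^T$ works) set is attained or at least that the infimum-based statement is the correct one; these are routine but worth a sentence.
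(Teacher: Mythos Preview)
Your proposal is correct and follows essentially the same approach as the paper: both identify the proof as a $d$-fold application of Gordon's comparison inequality, made possible because the treelike structure renders the blocks $X^{(j)}$ independent so that the bilinear Gaussian terms $\sum_j (\Lambda_{:,j})^T X^{(j)}\z^{(j)}$ can be compared termwise to the decoupled process in $f_{rd}$. Your write-up is in fact more explicit than the paper's (which simply cites Gordon and notes the $d$-term summation structure on both sides), and your observations about the Frobenius-sphere restriction on $\Lambda$ and the fact that the deterministic constraint $\phi(Q)=0$ appears identically on both sides are exactly the right justifications to include.
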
\vspace{-.17in}
\begin{proof}
  Follows as a direct $d$-fold application of the Gordon's probabilistic comparison theorem (see, e.g., Theorem B in \cite{Gordon88} as well as Theorem 1, Corollary 1, and Section 2.7.2 in \cite{Stojnicgscomp16} and Theorem 1, Corollary 1, and Section 2.3.2 in \cite{Stojnicgscompyx16}).

  In particular, for $d=1$, the statement of the above theorem becomes exactly the Gordon's comparison inequality, i.e., it becomes exactly, say, Theorem B in \cite{Gordon88}. For $d>1$, one then observes that
  \begin{eqnarray}
  f_{rd}(G,H) & \triangleq &
 \frac{1}{\sqrt{n}}\min_{\phi(Q)=0,\|\z^{(j)}\|_2=1}\max_{\|\Lambda\|_F=1}\lp \tr(\Lambda^T G)+\sum_{j=1}^{d}\|\Lambda_{:,j}\|_2(H_{:,j})^T\z^{(j)} -\tr(\Lambda^TQ) \rp \nonumber \\
 & = &
 \frac{1}{\sqrt{n}}\min_{\phi(Q)=0,\|\z^{(j)}\|_2=1}\max_{\|\Lambda\|_F=1}\lp \sum_{j=1}^{d}\Lambda_{:,j}^T G_{:,j}+\sum_{j=1}^{d}\|\Lambda_{:,j}\|_2(H_{:,j})^T\z^{(j)} -\tr(\Lambda^TQ) \rp \nonumber \\
 & = &
 \frac{1}{\sqrt{n}}\min_{\phi(Q)=0,\|\z^{(j)}\|_2=1}\max_{\|\Lambda\|_F=1}\lp \sum_{j=1}^{d}\|\z^{(j)}\|_2\Lambda_{:,j}^T G_{:,j}+\sum_{j=1}^{d}\|\Lambda_{:,j}\|_2(H_{:,j})^T\z^{(j)} -\tr(\Lambda^TQ) \rp, \nonumber \\
 \label{eq:proofthm1eq2}
\end{eqnarray}
which contains exactly the summation of the $d$ terms from the lower-bounding side of the Gordon's inequality. Given that (\ref{eq:ta11}) also contains the summation of exactly the corresponding $d$ terms from the other side of the inequality, the proof is completed.
\end{proof}
%The key point of the above theorem is the so-called random duality property
%\begin{equation}
%\mE_Af_{rp}(A;S_W)\geq \mE_{\u^{(2)},\h} f_{rd}(\u^{(2)},\h;S_W).\label{eq:thranl2}
%\end{equation}
%Together with Theorem \ref{thm:thm1} and the measure concentration, (\ref{eq:thranl2}) implies (\ref{eq:thm2eq2}).
%
%
%
%
%%%%%%%%%%%%%%%%%%%%%%%%%%%%%%%%%%%%%%%%%%%%%%%%%%%%%%%%%%%%%%%%%%
%%%%%%%%%%%%%%%%%%%%%%%%%%%%%%%%%%%%%%%%%%%%%%%%%%%%%%%%%%%%%%%%%%
%%\subsubsection{Handling the random dual}
%%\label{sec:thranl3}
%%%%%%%%%%%%%%%%%%%%%%%%%%%%%%%%%%%%%%%%%%%%%%%%%%%%%%%%%%%%%%%%%%
%%%%%%%%%%%%%%%%%%%%%%%%%%%%%%%%%%%%%%%%%%%%%%%%%%%%%%%%%%%%%%%%%%
%
\vspace{.1in}
\noindent \underline{3) \textbf{\emph{Handling the random dual:}}} After solving the inner maximization over $\Lambda$ and minimization over $\z^{(j)}$, we obtain for $f_{rd}(G,H)$ from (\ref{eq:ta16})
\begin{eqnarray}
 f_{rd}(G,H) & = &   \frac{1}{\sqrt{n}}
 \min_{\phi(Q)=0,\|\Lambda\|_F=1} \sum_{j=1}^{d} \|\Lambda_{:,j}\|_2\lp \|G_{:,j}-Q_{:,j}\|_2-\|H_{:,j}\|_2 \rp \nonumber \\
 & = &   \frac{1}{\sqrt{n}}
 \min_{\phi(Q)=0}\sqrt{ \sum_{j=1}^{d} \lp \|G_{:,j}-Q_{:,j}\|_2-\|H_{:,j}\|_2 \rp^2} \nonumber \\
 & = &   \frac{1}{\sqrt{n}}
 \min_{\phi(Q)=0}\sqrt{ \|G-Q\|_F^2-2\sum_{j=1}^{d} \|G_{:,j}-Q_{:,j}\|_2\|H_{:,j}\|_2 +\|H\|_F^2}.\label{eq:ta18}
\end{eqnarray}
Then
\begin{eqnarray}
\phi_0 & = & \lim_{n\rightarrow \infty}\mE_{G,H} f_{rd}(G,H) \nonumber \\
   & = &  \lim_{n\rightarrow \infty}\mE_{G,H} \frac{1}{\sqrt{n}}
 \min_{\phi(Q)=0}\sqrt{ \|G-Q\|_F^2-2\sum_{j=1}^{d} \|G_{:,j}-Q_{:,j}\|_2\|H_{:,j}\|_2 +\|H\|_F^2} \nonumber \\
   & = &  \lim_{n\rightarrow \infty}\mE_{G,H} \frac{1}{\sqrt{n}}
 \min_{\phi(Q)=0}\sqrt{ \|G-Q\|_F^2-2d\sqrt{\frac{ \|G-Q\|_F^2}{d}}\sqrt{\delta} +\delta d} \nonumber \\
   & = &  \lim_{n\rightarrow \infty}\mE_{G,H} \frac{1}{\sqrt{n}}
 \min_{\phi(Q)=0}\lp\|G-Q\|_F- \sqrt{\delta d}\rp \nonumber \\
   & = &  \lim_{n\rightarrow \infty}\mE_{G,H} \frac{1}{\sqrt{n}}
 \min_{\phi(Q)=0} \|G-Q\|_F- 1.\label{eq:ta18a0}
\end{eqnarray}
In the above considerations, we relied on the concentrations to obtain the exact equalities. It is useful to note that the Cauchy-Schwartz enables to obtain the corresponding inequalities (which are also sufficient for the subsequent analysis) without actually relying on the concentrations. Namely,
by the Cauchy-Schwartz inequality, we first find
\begin{equation}
\sum_{j=1}^{d}  \|G_{:,j}-Q_{:,j}\|_2 \|H_{:,j})\|_2
\leq \sqrt{\sum_{j=1}^{d} \|G_{:,j}-Q_{:,j}\|_2^2} \sqrt{\sum_{j=1}^{d}\|H_{:,j})\|_2^2}
=\|G-Q\|_F\|H\|_F.
\label{eq:supp6a3}
\end{equation}
We then also easily obtain
\begin{eqnarray}
\sqrt{\|G-Q\|_F^2 -2\sum_{j=1}^{d}  \|G_{:,j}-Q_{:,j}\|_2 \|H_{:,j})\|_2 +\|H\|_F^2  }
& \geq &
\sqrt{\|G-Q\|_F^2 -2\|G-Q\|_F\|H\|_F +\|H\|_F^2  } \nonumber \\
& \geq & \|G-Q\|_F -\|H\|_F.
\label{eq:supp6a4}
\end{eqnarray}
The above would then give the following inequality type alternative to (\ref{eq:ta18a0})
\begin{eqnarray}
\phi_0 & = & \lim_{n\rightarrow \infty}\mE_{G,H} f_{rd}(G,H) \nonumber \\
   & = &  \lim_{n\rightarrow \infty}\mE_{G,H} \frac{1}{\sqrt{n}}
 \min_{\phi(Q)=0}\sqrt{ \|G-Q\|_F^2-2\sum_{j=1}^{d} \|G_{:,j}-Q_{:,j}\|_2\|H_{:,j}\|_2 +\|H\|_F^2} \nonumber \\
   & \geq &  \lim_{n\rightarrow \infty}\mE_{G,H} \frac{1}{\sqrt{n}}
 \min_{\phi(Q)=0}  \|G-Q\|_F -\|H\|_F \nonumber \\
   & = &  \lim_{n\rightarrow \infty}\mE_{G} \frac{1}{\sqrt{n}}
 \min_{\phi(Q)=0}\lp\|G-Q\|_F- \sqrt{\delta d}\rp \nonumber \\
   & = &  \lim_{n\rightarrow \infty}\mE_{G} \frac{1}{\sqrt{n}}
 \min_{\phi(Q)=0} \|G-Q\|_F- 1.\label{eq:ta18a0a0}
\end{eqnarray}

Now we focus on the remaining optimization over $Q$. To that end we set
\begin{equation}\label{eq:ta19}
  \phi_i(Q_{i,1:d})\triangleq \mbox{sign}(\mbox{sign}(Q_{i,1:d})\1),
\end{equation}
and write
\begin{eqnarray}\label{eq:ta20}
   \min_{\phi(Q)=0} \|G-Q\|_F &=& \sqrt{\min_{\phi(Q)=0} \sum_{i=1}^{m}\sum_{j=1}^{d}(G_{ij}-Q_{ij})^2}  =  \sqrt{\sum_{i=1}^{m} \min_{\phi(Q_{i,1:d})=1}  \sum_{j=1}^{d}(G_{ij}-Q_{ij})^2} \nonumber \\
   &=& \sqrt{\sum_{i=1}^{m}  z_i(G_{i,1:d})},
\end{eqnarray}
with
\begin{equation}\label{eq:ta21}
  z_i(G_{i,1:d})\triangleq \min_{\phi(Q_{i,1:d})=1}  \sum_{j=1}^{d}(G_{ij}-Q_{ij})^2.
\end{equation}
To make generic considerations easier, we first look at the simplest scenario, $d=3$.

\noindent \underline{\textbf{\emph{Handling $z_i(G_{i,1:d})$ for $d=3$:}}} When $d=3$, there are four options for signs of $Q_{i,1:3}$, i.e. $\mbox{sign}(Q_{i,1:3})\in\{[1,1,1],[-1,1,1],[1,-1,1],[1,1,-1]\}$. When signs of $G_{ij}$ and $Q_{ij}$ match the contribution of the corresponding term $(G_{ij}-Q_{ij})^2$ is zero. When there is a mismatch, the contribution is $G_{ij}^2$. Utilizing this observation, one then finds the corresponding contribution for any of the four $\mbox{sign}(Q_{i,1:3})$ and ultimately, as their minimum, the overall contribution
\begin{eqnarray}\label{eq:ta22}
  \phi_z^{(1)} & = & \max(-G_{i1},0)^2+\max(-G_{i2},0)^2+\max(-G_{i3},0)^2 \nonumber \\
  \phi_z^{(2)} & = & \max(-(-1)G_{i1},0)^2+\max(-G_{i2},0)^2+\max(-G_{i3},0)^2 \nonumber \\
  \phi_z^{(3)} & = & \max(-G_{i1},0)^2+\max(-(-1)G_{i2},0)^2+\max(-G_{i3},0)^2 \nonumber \\
  \phi_z^{(4)} & = & \max(-G_{i1},0)^2+\max(-G_{i2},0)^2+\max(-(-1)G_{i3},0)^2 \nonumber \\
  z_i(G_{i,1:3}) & = & \min_i  \phi_z^{(i)}.
\end{eqnarray}
When signs of $G_{i,1:3}$ match any of the four $\mbox{sign}(Q_{i,1:3})$ options, one clearly has $z_i(G_{i,1:3})=0$. The interesting cases therefore are when there are mismatches between the signs of $G_{i,1:3}$ and allowed signs of $Q_{i,1:3}$. Exactly four such cases exist $\{[-1,-1,-1],[1,-1,-1],[-1,1,-1],[-1,-1,1]\}$. We now separately look at the possible signs choices for $G_{i,1:3}$.

\noindent \underline{\textbf{\emph{(i) $\mbox{sign}(G_{i,1:3})=[-1,-1,-1]$:}}} From (\ref{eq:ta22}), we then have
\begin{equation}\label{eq:ta23}
  \phi_z^{(1)}  =  G_{i1}^2+G_{i2}^2+G_{i3}^2 \quad \phi_z^{(2)}  = G_{i2}^2+G_{i3}^2 \quad \phi_z^{(3)}  =  G_{i1}^2+G_{i3}^2 \quad  \phi_z^{(4)}  =  G_{i1}^2+G_{i2}^2,
 \end{equation}
and
\begin{eqnarray}\label{eq:ta24}
   z_i(G_{i,1:3})  =  \min_i  \phi_z^{(i)}= \min  \{\phi_z^{(2)},\phi_z^{(3)},\phi_z^{(4)}\}
   =\min \{G_{i2}^2+G_{i3}^2,G_{i1}^2+G_{i3}^2,G_{i1}^2+G_{i2}^2\}.
\end{eqnarray}

\noindent \underline{\textbf{\emph{(ii) $\mbox{sign}(G_{i,1:3})=[1,-1,-1]$:}}} From (\ref{eq:ta22}), we then also have
\begin{equation}\label{eq:ta25}
  \phi_z^{(1)}  = G_{i2}^2+G_{i3}^2 \quad \phi_z^{(2)}  =  G_{i1}^2+G_{i2}^2+G_{i3}^2 \quad \phi_z^{(3)}  =  G_{i3}^2 \quad  \phi_z^{(4)}  =  G_{i2}^2,
 \end{equation}
and
\begin{eqnarray}\label{eq:ta26}
   z_i(G_{i,1:3})  =  \min_i  \phi_z^{(i)}= \min  \{\phi_z^{(3)},\phi_z^{(4)}\}
   =\min \{G_{i3}^2,G_{i2}^2\}.
\end{eqnarray}
Due to symmetry, one then analogously easily finds as well
\begin{eqnarray}\label{eq:ta27}
\mbox{sign}(G_{i,1:3})=[-1,1,-1] & \Longrightarrow  & z_i(G_{i,1:3}) =\min \{G_{i3}^2,G_{i1}^2\} \nonumber \\
\mbox{sign}(G_{i,1:3})=[-1,-1,1] & \Longrightarrow  & z_i(G_{i,1:3}) =\min \{G_{i2}^2,G_{i1}^2\}.
\end{eqnarray}
 A combination of (\ref{eq:ta21}),(\ref{eq:ta21}), (\ref{eq:ta21}), (\ref{eq:ta21}) , and (\ref{eq:ta21}) and the symmetry of standard normals give
\begin{eqnarray}
  \phi_0 & \triangleq & \lim_{n\rightarrow\infty} \mE_{G,H}f_{rd}(G,H) = \lim_{n\rightarrow \infty}\mE_{G,H} \frac{1}{\sqrt{n}}
 \min_{\phi(Q)=0} \|G-Q\|_F- 1 \nonumber \\
 & = & \sqrt{\lim_{n\rightarrow\infty} \mE_{G} \frac{\sum_{i=1}^m \lp\frac{1}{8}\min \{G_{i2}^2+G_{i3}^2,G_{i1}^2+G_{i3}^2,G_{i1}^2+G_{i2}^2\}+\frac{3}{8}\min\{G_{i2}^2,G_{i1}^2\}\rp}{n}}  -1 \nonumber \\
 & = & \sqrt{\alpha\mE_{G} \lp\frac{1}{8}\min \{G_{i2}^2+G_{i3}^2,G_{i1}^2+G_{i3}^2,G_{i1}^2+G_{i2}^2\}+\frac{3}{8}\min\{G_{i2}^2,G_{i1}^2\}\rp}  -1.\label{eq:ta28}
\vspace{-.0in}\end{eqnarray}
 We summarize the above results in the following lemma.
\begin{lemma}(Memory capacity upper bound; $d=3$) Assume the setup of Theorem \ref{thm:thm1} with $d=3$. For $d=3$, consider the $n$-scaled memory capacity from (\ref{eq:model4}), $c(3;\mbox{sign})$ and the following
\vspace{-.0in}
\vspace{-.0in}\begin{center}
\tcbset{beamer,lower separated=false, fonttitle=\bfseries,
coltext=black , interior style={top color=orange!10!yellow!30!white, bottom color=yellow!80!yellow!50!white}, title style={left color=orange!10!cyan!30!blue, right color=green!70!blue!20!black}}
 \begin{tcolorbox}[beamer,title=\textbf{($n$-scaled, $d=3$) memory capacity upper bound:},lower separated=false, fonttitle=\bfseries,width=.9\linewidth] %interior coltext=black ,
%colframe=yellow!50!cyan!60!black, coltext=black , colback=yellow!80!orange!20!white] %interior style={left color=yellow!10!white,right colframe=green!45!blue!60!black, coltext=black , colback=yellow!95!green!40!white] %interior style={left color=yellow!10!white,right color=yellow!80!white}] title style={left color=black, right color=cyan}
\vspace{-.15in}
 \begin{eqnarray*}
\hspace{-.0in} \hat{c}(3;\mbox{sign})=\frac{4}{\frac{3}{\sqrt{2\pi}}\int_{-\infty}^{\infty} \lp g^2+(1+g^2)\mbox{erf}\lp\frac{|g|}{\sqrt{2}}\rp-\frac{2e^{-\frac{|g|^2}{2}}|g|}{\sqrt{2\pi}}\rp
\mbox{erfc}\lp\frac{|g|}{\sqrt{2}}\rp e^{-\frac{g^2}{2}}dg}\approx \mathbf{4.025}. \end{eqnarray*}
\vspace{-.15in}
 \end{tcolorbox}
\end{center}\vspace{-.0in}
Then for any sample complexity $m$ such that $\alpha\triangleq \lim_{n\rightarrow\infty}\frac{m}{n}>\hat{c}(3;\mbox{sign})$
\begin{eqnarray}
 \lim_{n\rightarrow\infty}\mP_{X}(A([n,3,1];\mbox{sign}) \quad \mbox{fails to memorize data set} \quad (X,\1))\longrightarrow 1,\label{eq:ta30}
\end{eqnarray}
and
\begin{eqnarray}
 \lim_{n\rightarrow\infty}\mP_{X}(c(3,\mbox{sign})<\hat{c}(3,\mbox{sign}))\longrightarrow 1.\label{eq:ta30a}
\end{eqnarray}
\label{lemma:lemma2}
\end{lemma}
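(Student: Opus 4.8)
The plan is to obtain Lemma~\ref{lemma:lemma2} essentially for free from Theorem~\ref{thm:thm1} together with the scalarization already performed in (\ref{eq:ta18})--(\ref{eq:ta28}). By Theorem~\ref{thm:thm1} it suffices to show that $\alpha>\hat{c}(3;\mbox{sign})$ forces $\phi_0>0$, and (\ref{eq:ta28}) already supplies the closed relation
\begin{equation*}
\phi_0=\sqrt{\alpha I_3}-1,\qquad I_3:=\mE_G\lp\tfrac18\min\{G_2^2+G_3^2,G_1^2+G_3^2,G_1^2+G_2^2\}+\tfrac38\min\{G_2^2,G_1^2\}\rp,
\end{equation*}
with $G_1,G_2,G_3$ iid standard normals. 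Hence $\phi_0>0\Leftrightarrow\alpha>1/I_3$, and the whole content of the lemma collapses to the purely analytic claim $1/I_3=\hat{c}(3;\mbox{sign})$, i.e. $I_3=\tfrac14 D$, where $D$ denotes the integral in the displayed box.

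To evaluate $I_3$ I would split it as $I_3=\tfrac18\mu_3+\tfrac38\mu_2$ with $\mu_2:=\mE[\min\{G_1^2,G_2^2\}]$ and $\mu_3:=\mE[\min\{G_2^2+G_3^2,G_1^2+G_3^2,G_1^2+G_2^2\}]$, and use the elementary identity $\min_{1\le k\le d}\sum_{j\neq k}G_j^2=\sum_{j=1}^{d}G_j^2-\max_{1\le k\le d}G_k^2$, which gives $\mu_2=2-\mE[\max\{G_1^2,G_2^2\}]$ and $\mu_3=3-\mE[\max\{G_1^2,G_2^2,G_3^2\}]$. Each of the two remaining order-statistics expectations I would compute by conditioning on which index attains the maximum (all $d$ cases equal by symmetry) and integrating out the others, using $\mP(|G|<t)=\erf(t/\sqrt2)$ and $\mP(|G|>t)=\erfc(t/\sqrt2)$; this yields $\mE[\max\{G_1^2,G_2^2\}]=2\,\mE_g[g^2\erf(|g|/\sqrt2)]$ and $\mE[\max\{G_1^2,G_2^2,G_3^2\}]=3\,\mE_g[g^2\erf(|g|/\sqrt2)^2]$, where $\mE_g$ is expectation over a single standard normal $g$. (As a sanity check, the substitution $U=(G_1-G_2)/\sqrt2$, $V=(G_1+G_2)/\sqrt2$, which are independent standard normals with $G_1^2-G_2^2=2UV$, gives the clean value $\mu_2=1-2/\pi$.) Substituting back and using $1-\erf^2=\erfc\,(1+\erf)$ and $1-\erf=\erfc$ reduces $I_3$ to a single one-dimensional Gaussian integral of the shape $\tfrac38\,\mE_g[g^2\,\erfc(|g|/\sqrt2)\,(3+\erf(|g|/\sqrt2))]$.

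The last step is to massage this into the exact integrand appearing in the box, which amounts to the identity $\mE_g[g^2\erfc(|g|/\sqrt2)^2]+\mE_g\big[\tfrac{4|g|e^{-g^2/2}}{\sqrt{2\pi}}\erfc(|g|/\sqrt2)\big]=2\,\mE_g[\erf(|g|/\sqrt2)\erfc(|g|/\sqrt2)]$; this follows from an integration by parts that transfers the factor $g^2$ onto the Gaussian weight, the boundary term producing the $\tfrac{2|g|e^{-g^2/2}}{\sqrt{2\pi}}$ correction and the $\erf'=\sqrt{2/\pi}\,e^{-\cdot^2/2}$ contributions producing the $(1+g^2)\erf$ piece. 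I expect this erf/erfc bookkeeping --- tracking signs, boundary terms and which contributions land on which summand of the stated integrand --- to be the only genuinely fiddly part; conceptually nothing beyond Theorem~\ref{thm:thm1} and one-variable calculus is needed, and numerical quadrature of the resulting integral then gives $\hat{c}(3;\mbox{sign})\approx 4.025$. Finally, (\ref{eq:ta30}) is exactly the conclusion of Theorem~\ref{thm:thm1} instantiated at this $\alpha$, and (\ref{eq:ta30a}) follows because the argument applies to every $\alpha>\hat{c}(3;\mbox{sign})$: for any such $\alpha$ a data set of size $m\sim\alpha n$ fails to be memorized with probability tending to $1$, so the $n$-scaled capacity lies below $\alpha$ with probability tending to $1$, and letting $\alpha\downarrow\hat{c}(3;\mbox{sign})$ delivers the stated bound.
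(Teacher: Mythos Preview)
Your proposal is correct and complete. The overall architecture is the same as the paper's: both rely on (\ref{eq:ta28}) to reduce the lemma to the analytic identity $I_3=1/\hat{c}(3;\mbox{sign})$, and both invoke Theorem~\ref{thm:thm1} for (\ref{eq:ta30})--(\ref{eq:ta30a}). The difference is purely in how the two expectations in (\ref{eq:ta28}) are computed.

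The paper conditions on the \emph{smallest} (respectively, the larger of the two smallest) magnitudes, obtaining directly
\[
\mE\min\{G_{1}^2,G_{2}^2\}=2\phi_g^{(1)},\qquad
\mE\min_{k}\sum_{j\neq k}G_{j}^2=6\phi_g^{(2)},
\]
with $\phi_g^{(1)},\phi_g^{(2)}$ the two one-dimensional integrals in (\ref{eq:ta32}); the point is that $\tfrac{3}{4}\phi_g^{(1)}+\tfrac{3}{4}\phi_g^{(2)}$ is \emph{already} the boxed integrand, so no further manipulation is needed. You instead pass through the identity $\min_k\sum_{j\neq k}G_j^2=\sum_jG_j^2-\max_kG_k^2$, which gives the cleaner intermediate form $I_3=\tfrac{3}{8}\,\mE_g\big[g^2\,\erfc(|g|/\sqrt2)\,(3+\erf(|g|/\sqrt2))\big]$ but then costs you one integration by parts to recover the exact integrand displayed in the lemma (the identity you state, $\mE_g[g^2\erfc^2]+\tfrac{4}{\sqrt{2\pi}}\mE_g[|g|e^{-g^2/2}\erfc]=2\,\mE_g[\erf\cdot\erfc]$, is indeed what falls out of integrating $\int g^2e^{-g^2/2}\erfc^2\,dg$ by parts with $dv=g^2e^{-g^2/2}dg$). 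So your route trades a slightly less direct match with the stated formula for a more transparent evaluation of the order-statistics expectations; the paper's route is shorter because the conditioning it uses is tailored to land on the displayed integrand immediately.
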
\vspace{-.17in}
\begin{proof}
One first finds the expectations in (\ref{eq:ta28}) as
\begin{eqnarray}\label{eq:ta31}
  \mE_{G} \min \{G_{i1}^2,G_{i2}^2\} & = & 2\phi_g^{(1)} \nonumber \\
  \mE_{G} \min \{G_{i2}^2+G_{i3}^2,G_{i1}^2+G_{i3}^2,G_{i1}^2+G_{i2}^2\} & = & 6\phi_g^{(2)},
\end{eqnarray}
where
\begin{eqnarray}\label{eq:ta32}
\phi_g^{(1)} & = & \frac{1}{\sqrt{2\pi}}\int_{-\infty}^{\infty} g^2
\mbox{erfc}\lp\frac{|g|}{\sqrt{2}}\rp e^{-\frac{g^2}{2}}dg \nonumber \\
\phi_g^{(2)} & = & \frac{1}{\sqrt{2\pi}}\int_{-\infty}^{\infty} \lp (\mbox{erf}\lp\frac{|g|}{\sqrt{2}}\rp-\frac{2e^{-\frac{|g|^2}{2}}|g|}{\sqrt{2\pi}}
+g^2\mbox{erf}\lp\frac{|g|}{\sqrt{2}}\rp
\rp
\mbox{erfc}\lp\frac{|g|}{\sqrt{2}}\rp e^{-\frac{g^2}{2}}dg.
\end{eqnarray}
The rest follows from Theorem \ref{thm:thm1}, the above discussion, and after setting $\phi_0=0$ in (\ref{eq:ta28}).
\end{proof}
The above basically means that as long as the sample complexity $m$ is such that $m>4.025n$ (where $n$ is the ambient dimension of the data vectors) the network will not be able to memorize the data. Consequently, one has for the memory capacity of the TCM SPNN with three neurons in the hidden layer, $C(A([n,3,1];\mbox{sign}))<4.025n$.

\noindent \underline{\textbf{\emph{Handling $z_i(G_{i,1:d})$ for general (odd) $d$:}}} For general $d$ one can repeat the reasoning strategy presented above between (\ref{eq:ta21}) and (\ref{eq:ta28}). This time though, there are $2^{d-1}$ options for the signs of $Q_{i,1:d}$ and one has to be very careful to account for each of them. Eventually, one obtains the following analogue to (\ref{eq:ta28}):
 \begin{equation}
  \phi_0    = \sqrt{\alpha\frac{\sum_{l=1}^{\lceil \frac{d}{2}\rceil } \binom{d}{\lceil\frac{d}{2}\rceil -l} \varphi_1(l;d)}{2^d}} -1, \quad \mbox{where} \quad  \varphi_1(l;d)    \triangleq  \mE_{G}\left \|\bar{G}_{1:l}^{(\lfloor\frac{d}{2}\rfloor+l)}\right \|_2^2,\label{eq:ta33}
\vspace{-.0in}\end{equation}
 and $\bar{G}^{(p)}$ is the vector obtained by taking the first $p$ components of $G_{i,1:d}$ and sorting them in the increasing order of their magnitudes. After handling the expectations, one arrives at the following theorem.
\begin{theorem}(Memory capacity upper bound; general $d$) Assume the setup of Theorem \ref{thm:thm1} with general $d$. Let the network $n$-scaled capacity, $c(d,\mbox{sign})$, be as defined in (\ref{eq:model4}). First one has
 \begin{eqnarray}\label{eq:34}
 \varphi_1(l;d)=\varphi_2(l;d)
\frac{1}{\sqrt{2\pi}}\int_{-\infty}^{\infty} \lp (l-1)\varphi_3(g)\varphi_4(g)^{l-2}
+g^2\varphi_4(g)^{l-1}\rp
\lp 1-\varphi_4(g)\rp^{\lfloor\frac{d}{2}\rfloor} e^{-\frac{g^2}{2}}dg, \vspace{-.0in}\end{eqnarray}
where
 \begin{eqnarray}\label{eq:35}
\varphi_2(l;d)=l\binom{\lfloor\frac{d}{2}\rfloor+l}{l}, \quad \varphi_3(g)= \mbox{erf}\lp\frac{|g|}{\sqrt{2}}\rp-\frac{2e^{-\frac{|g|^2}{2}}|g|}{\sqrt{2\pi}}, \quad \varphi_4(g)\triangleq\mbox{erf}\lp\frac{|g|}{\sqrt{2}}\rp.\end{eqnarray}
Further, consider the following
\vspace{-.0in}
\vspace{-.0in}\begin{center}
\tcbset{beamer,lower separated=false, fonttitle=\bfseries,
coltext=black , interior style={top color=orange!10!yellow!30!white, bottom color=yellow!80!yellow!50!white}, title style={left color=orange!10!cyan!30!blue, right color=green!70!blue!20!black}}
 \begin{tcolorbox}[beamer,title=\textbf{($n$-scaled general $d$) memory capacity upper bound:},lower separated=false, fonttitle=\bfseries,width=.6\linewidth] %interior coltext=black ,
%colframe=yellow!50!cyan!60!black, coltext=black , colback=yellow!80!orange!20!white] %interior style={left color=yellow!10!white,right colframe=green!45!blue!60!black, coltext=black , colback=yellow!95!green!40!white] %interior style={left color=yellow!10!white,right color=yellow!80!white}] title style={left color=black, right color=cyan}
\vspace{-.15in}
\begin{eqnarray*}
\hspace{-.0in} \hat{c}(d;\mbox{sign})=\frac{1}{\frac{1}{2^d}\sum_{l=1}^{\lceil \frac{d}{2}\rceil } \binom{d}{\lceil\frac{d}{2}\rceil -l} \varphi_1(l;d)}. \end{eqnarray*}
\vspace{-.15in}
 \end{tcolorbox}
\end{center}\vspace{-.0in}
Then for any sample complexity $m$ such that $\alpha\triangleq \lim_{n\rightarrow\infty}\frac{m}{n}>\hat{c}(d;\mbox{sign})$
\begin{eqnarray}
 \lim_{n\rightarrow\infty}\mP_{X}(A([n,d,1];\mbox{sign}) \quad \mbox{fails to memorize data set} \quad (X,\1))\longrightarrow 1,\label{eq:ta36}
\end{eqnarray}
and
\begin{eqnarray}
 \lim_{n\rightarrow\infty}\mP_{X}(c(d,\mbox{sign})<\hat{c}(d,\mbox{sign}))\longrightarrow 1.\label{eq:ta37}
\end{eqnarray}
\label{theorem:thm2}
\end{theorem}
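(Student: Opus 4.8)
The plan is to mimic the derivation of Lemma~\ref{lemma:lemma2} for general odd $d$: establish the expression (\ref{eq:ta33}) for $\phi_0$ (equivalently, the identity $\varphi_1(l;d)=\mE_G\|\bar{G}_{1:l}^{(\lfloor d/2\rfloor+l)}\|_2^2$), evaluate the Gaussian integrals to reach the closed form (\ref{eq:34})--(\ref{eq:35}), and then read off (\ref{eq:ta36})--(\ref{eq:ta37}) from Theorem~\ref{thm:thm1}. The substantive step is the combinatorial reduction. Starting from (\ref{eq:ta20})--(\ref{eq:ta21}), the object to control is $z_i(G_{i,1:d})=\min_{\phi(Q_{i,1:d})=1}\sum_{j=1}^{d}(G_{ij}-Q_{ij})^2$, where each summand equals $G_{ij}^2$ if $\mbox{sign}(Q_{ij})\neq\mbox{sign}(G_{ij})$ and $0$ otherwise, and (for odd $d$) the constraint $\phi(Q_{i,1:d})=1$ is precisely that at least $\lceil d/2\rceil$ of the $\mbox{sign}(Q_{ij})$ equal $+1$. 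Hence the optimal $Q$ matches the sign of $G_{i,1:d}$ unless $G_{i,1:d}$ has strictly fewer than $\lceil d/2\rceil$ positive entries: if it has exactly $\lceil d/2\rceil-l$ positive entries, equivalently $\lfloor d/2\rfloor+l$ negative ones, one is forced to flip $l$ negatives to $+1$, and the cheapest choice flips the $l$ of smallest magnitude, so $z_i$ is the sum of the $l$ smallest squares among those $\lfloor d/2\rfloor+l$ negative entries. By exchangeability of the i.i.d.\ Gaussian entries, this configuration of $G_{i,1:d}$ occurs with probability $\binom{d}{\lceil d/2\rceil-l}2^{-d}$, and conditionally the law of $z_i$ is that of the sum of the $l$ smallest squares of $\lfloor d/2\rfloor+l$ i.i.d.\ half-normals; summing over $l\in\{1,\dots,\lceil d/2\rceil\}$, applying the law of large numbers to $\tfrac1n\sum_{i=1}^m z_i$ with $m/n\to\alpha$ (or, to avoid concentration, the Cauchy--Schwarz route as in (\ref{eq:ta18a0a0})), and plugging into (\ref{eq:ta18a0}) yields (\ref{eq:ta33}).

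Next I would evaluate $\varphi_1(l;d)=\mE_G\|\bar{G}_{1:l}^{(\lfloor d/2\rfloor+l)}\|_2^2$. Put $p=\lfloor d/2\rfloor+l$ and condition on the value $g$ of the $l$-th smallest magnitude among $p$ i.i.d.\ half-normals; its density is $\frac{p!}{(l-1)!\,\lfloor d/2\rfloor!}\,f(g)\,\varphi_4(g)^{l-1}(1-\varphi_4(g))^{\lfloor d/2\rfloor}$, where $f(g)=\frac{2}{\sqrt{2\pi}}e^{-g^2/2}$ and $\varphi_4(g)=\erf(|g|/\sqrt2)$ are the half-normal density and c.d.f., respectively. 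Given this value, the $l-1$ retained smaller entries are i.i.d.\ half-normal truncated to $(0,g)$, and since $\int_0^g x^2 f(x)\,dx=\erf(g/\sqrt2)-\frac{2e^{-g^2/2}g}{\sqrt{2\pi}}=\varphi_3(g)$, the conditional mean of the retained sum of squares is $(l-1)\varphi_3(g)/\varphi_4(g)+g^2$. Integrating this against the order-statistics density, recognizing $\varphi_2(l;d)=l\binom{\lfloor d/2\rfloor+l}{l}=\frac{p!}{(l-1)!\,\lfloor d/2\rfloor!}$, and folding the even integrand from $(0,\infty)$ onto $(-\infty,\infty)$ produces (\ref{eq:34})--(\ref{eq:35}), with the usual convention that the term $(l-1)\varphi_3\varphi_4^{l-2}$ is read as $0$ when $l=1$.

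Finally, the map $\alpha\mapsto\phi_0$ in (\ref{eq:ta33}) is strictly increasing on $\alpha\geq0$ and vanishes exactly at $\alpha=2^d/\sum_{l=1}^{\lceil d/2\rceil}\binom{d}{\lceil d/2\rceil-l}\varphi_1(l;d)=\hat c(d;\mbox{sign})$, so $\alpha>\hat c(d;\mbox{sign})$ forces $\phi_0>0$; Theorem~\ref{thm:thm1} then gives (\ref{eq:ta36}), and since this holds for every $\alpha>\hat c(d;\mbox{sign})$, (\ref{eq:ta37}) follows as in Lemma~\ref{lemma:lemma2}. The step I expect to be the main obstacle is the combinatorial reduction in the first paragraph: one must check that the ``flip the $l$ smallest-magnitude negatives'' configuration is optimal for \emph{every} sign pattern of $G_{i,1:d}$, and pin down the weights $\binom{d}{\lceil d/2\rceil-l}2^{-d}$ and the order-statistic index $\lfloor d/2\rfloor+l$ exactly --- an off-by-one in the majority threshold (which is precisely where the oddness of $d$ is used) or a miscount of the retained entries would propagate straight into the final constant. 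Once (\ref{eq:ta33}) is secured, the remaining two steps are a routine order-statistics integral and an immediate appeal to Theorem~\ref{thm:thm1}.
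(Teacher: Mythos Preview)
Your proposal is correct and follows essentially the same route as the paper: first the combinatorial reduction showing that, when $G_{i,:}$ has exactly $\lceil d/2\rceil-l$ positive entries, $z_i$ equals the sum of the $l$ smallest squared magnitudes among the $\lfloor d/2\rfloor+l$ negatives (the paper does this case-by-case in $\g^{(i)}\1=-d,-d+2,\dots$, you phrase it more compactly), then the evaluation of $\varphi_1(l;d)$, then the appeal to Theorem~\ref{thm:thm1}. The only stylistic difference is in the second step: the paper builds the integral from scratch via two combinatorial pre-factors ($\binom{\lfloor d/2\rfloor+l}{l}$ for the choice of the $l$ smallest positions, then $l$ for the location of the largest among them), whereas you invoke the standard density of the $l$-th order statistic of $p=\lfloor d/2\rfloor+l$ half-normals and condition on it; since $\tfrac{p!}{(l-1)!\lfloor d/2\rfloor!}=l\binom{\lfloor d/2\rfloor+l}{l}=\varphi_2(l;d)$ these are literally the same computation written two ways.
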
\vspace{-.17in}

\begin{proof}
Clearly, there are two key components that need to be shown: \textbf{\emph{\emph{(i)}}} for any $d$, $\phi_0$ is as given in (\ref{eq:ta33}); and \textbf{\emph{\emph{(ii)}}} $\varphi_1(l;d)$ is as given in (\ref{eq:34}). We split the presentation into two parts. Each of the parts addresses one of these two key proof components.

\noindent \underline{\textbf{\emph{\emph{(i)}}} \textbf{Determining $\phi_0$ for any $d$:}} The same mechanism presented in (\ref{eq:ta22})-(\ref{eq:ta28}) for $d=3$ applies or extrapolates to any $d$. Below we formalize the extrapolation. Let $\s^{(i)}\triangleq \mbox{sign}(Q_{i,:})$ and $\g^{(i)}\triangleq \mbox{sign}(G_{i,:})$. Taking $\1$ as a column vector (of appropriate dimension) of all ones, we have for a general $d$ analogue to (\ref{eq:ta22})
   \begin{equation}\label{eq:supp18}
   z_i(G_{i,:})  = \min_i  \phi_z^{(i)} \quad \mbox{with}\quad   \phi_z^{(i)}  = \sum_{\s^{(i)}\1\geq 1}\sum_{j=1}^d\max(-s_j^{(i)}G_{ij},0)^2, 1\leq i\leq 2^{d-1}.
\end{equation}
Then, in an analogy with (\ref{eq:ta23})-(\ref{eq:ta27}), the signs of $G_{i,:}$ that bring  nonzero contributions to $  z_i(G_{i,:})$ are exactly the opposite ones of those of $Q$. In other words, the relevant $G_{i,:}$ are those with $\g^{(i)}\1\leq -1$.

\bl{\underline{($\star$) $\g^{(i)}\1=-d$:}}  Now, one observes that for $\g^{(i)}\1=-d$, (\ref{eq:supp18}) gives
   \begin{eqnarray}\label{eq:supp19}
   z_i(G_{i,:})  = \min_i  \phi_z^{(i)} \quad \mbox{with}\quad   \phi_z^{(i)}  = \sum_{\s^{(i)}\1= 1}\sum_{j=1}^d\max(-s_j^{(i)}G_{ij},0)^2, 1\leq i\leq 2^{d-1},
\end{eqnarray}
which implies
   \begin{eqnarray}\label{eq:supp20}
   z_i(G_{i,:})  = \min_{\s^{(i)}\1= 1} \frac{\1^T+\s^{(i)}}{2}(G_{i,:}\circ G_{i,:})^T=\left \|\bar{G}_{1:\lceil\frac{d}{2}\rceil}^{d}\right \|_2^2=\left \|\bar{G}_{1:\lceil\frac{d}{2}\rceil}^{(\lfloor\frac{d}{2}\rfloor+\lceil\frac{d}{2}\rceil)}\right \|_2^2,
\end{eqnarray}
where $\circ$ stands for the componentwise multiplication.

\bl{\underline{($\star\star$) $\g^{(i)}\1=-d+2$:}}  For the concreteness and without loss of generality  we take $\g^{(i)}=\begin{bmatrix}
                                                                                                             -1 & -1 & \dots & -1 & 1
                                                                                                           \end{bmatrix}$. One then observes that (\ref{eq:supp18}) gives
   \begin{equation}\label{eq:supp21}
   z_i(G_{i,:})  = \min_i  \phi_z^{(i)} \quad \mbox{with}\quad   \phi_z^{(i)}  = \sum_{\s_{1:d-1}^{(i)}\1= 0}\sum_{j=1}^d\max(-s_j^{(i)}G_{ij},0)^2, 1\leq i\leq 2^{d-1},
\end{equation}
which implies
   \begin{equation}\label{eq:supp22}
   z_i(G_{i,:})  = \min_{\s_{1:d-1}^{(i)}\1= 0} \frac{\1_{1:d-1}^T+\s_{1:d-1}^{(i)}}{2}(G_{i,:}\circ G_{i,:})^T=\left \|\bar{G}_{1:\lceil\frac{d}{2}\rceil-1}^{d-1}\right \|_2^2=\left \|\bar{G}_{1:\lceil\frac{d}{2}\rceil-1}^{(\lfloor\frac{d}{2}\rfloor+\lceil\frac{d}{2}\rceil-1)}\right \|_2^2.
\end{equation}
In a similar fashion, one then finds for $\g^{(i)}\1=-d+4$ and $\g^{(i)}=\begin{bmatrix}
                                                                                                             -1 & -1 & \dots & -1& 1 & 1
                                                                                                           \end{bmatrix}$
   \begin{equation}\label{eq:supp23}
   z_i(G_{i,:})  = \min_{\s_{1:d-2}^{(i)}\1= -1} \frac{\1_{1:d-2}^T+\s_{1:d-2}^{(i)}}{2}(G_{i,:}\circ G_{i,:})^T=\left \|\bar{G}_{1:\lceil\frac{d}{2}\rceil-2}^{d-2}\right \|_2^2=\left \|\bar{G}_{1:\lceil\frac{d}{2}\rceil-2}^{(\lfloor\frac{d}{2}\rfloor+\lceil\frac{d}{2}\rceil-2)}\right \|_2^2,
\end{equation}
and proceeds inductively until $\g^{(i)}\1=-d+(d-1)=-1$. Taking into account the associated combinatorial aspects, one finally has
 \begin{equation}
     \mE z_i(G_{i,:})= \frac{\sum_{l=1}^{\lceil \frac{d}{2}\rceil } \binom{d}{\lceil\frac{d}{2}\rceil -l} \varphi_1(l;d)}{2^d}, \quad \mbox{where} \quad  \varphi_1(l;d)    \triangleq  \mE_{G}\left \|\bar{G}_{1:l}^{(\lfloor\frac{d}{2}\rfloor+l)}\right \|_2^2.\label{eq:supp24}
\vspace{-.0in}\end{equation}
A combination of concentrations, (\ref{eq:supp24}), (\ref{eq:ta16}),(\ref{eq:ta18}), and (\ref{eq:ta20}) also gives
  \begin{equation}
  \phi_0    = \sqrt{\alpha\frac{\sum_{l=1}^{\lceil \frac{d}{2}\rceil } \binom{d}{\lceil\frac{d}{2}\rceil -l} \varphi_1(l;d)}{2^d}} -1, \quad \mbox{where} \quad  \varphi_1(l;d)    \triangleq  \mE_{G}\left \|\bar{G}_{1:l}^{(\lfloor\frac{d}{2}\rfloor+l)}\right \|_2^2,\label{eq:supp25}
\vspace{-.0in}\end{equation}
which shows that $\phi_0$ is, indeed, exactly as stated in (\ref{eq:ta33}).

\noindent \underline{\textbf{\emph{\emph{(ii)}}} \textbf{Computing $\varphi_1(l;d)$:}} We take a vector of  iid  standard normals, $\g\in\mR^{\lfloor\frac{d}{2}\rfloor+l}$,  and consider $\bar{\g}$ as its sorted version, obtained by sorting the magnitudes of $\g$ in increasing order. Then assuming, without a loss of generality, that the first $l$ magnitudes of $\g$ are the smallest and  accounting for other symmetric scenarios via a combinatorial pre-factor one has
   \begin{eqnarray}\label{eq:supp26}
\varphi_1(l;d)  &  \triangleq & \mE_{\g} \bar{\g}_{1:l}^2=\binom{\lfloor\frac{d}{2}\rfloor+l}{l}\lp \frac{1}{\sqrt{2\pi}}\rp^{\lfloor\frac{d}{2}\rfloor+l}\int_{\g_{1:l}}\|\g_{1:l}\|_2^2
\int_{|\g_{l+1:\lfloor\frac{d}{2}\rfloor+l}|\geq \max(|\g_{1:l}|)}e^{-\frac{\|\g\|_2^2}{2}}d\g \nonumber \\
& = &  \binom{\lfloor\frac{d}{2}\rfloor+l}{l}\lp \frac{1}{\sqrt{2\pi}}\rp^{l}\int_{\g_{1:l}}\|\g_{1:l}\|_2^2
\lp 1-\varphi_4(\max(|\g_{1:l}|))\rp^{\lfloor\frac{d}{2}\rfloor}e^{-\frac{\|\g_{1:l}\|_2^2}{2}}d\g_{1:l}.
 \end{eqnarray}
Moreover, one can again, without a loss of generality, assume that the largest of the first $l$ smallest magnitudes of $\g$ is $\g_l$. Accounting for other options through another combinatorial pre-factor, one also has
   \begin{eqnarray}\label{eq:supp27}
\varphi_1(l;d)
& = &  \binom{\lfloor\frac{d}{2}\rfloor+l}{l}\lp \frac{1}{\sqrt{2\pi}}\rp^{l}\int_{\g_{1:l}}\|\g_{1:l}\|_2^2
\lp 1-\varphi_4(\max(|\g_{1:l}|))\rp^{\lfloor\frac{d}{2}\rfloor}e^{-\frac{\|\g_{1:l}\|_2^2}{2}}d\g_{1:l}\nonumber \\
& = &  l\binom{\lfloor\frac{d}{2}\rfloor+l}{l}\lp \frac{1}{\sqrt{2\pi}}\rp^{l}\int_{\g_{l}}
\lp 1-\varphi_4(|\g_{l}|)\rp^{\lfloor\frac{d}{2}\rfloor}e^{\frac{-\g_{l}^2}{2}}
 \int_{|\g_{1:l-1}|\leq |\g_l|}\|\g_{1:l}\|_2^2e^{-\frac{\|\g_{1:l-1}\|_2^2}{2}}d\g_{1:l-1} d\g_l \nonumber \\
& = &  l\binom{\lfloor\frac{d}{2}\rfloor+l}{l} \frac{1}{\sqrt{2\pi}}\int_{\g_{l}}
\lp 1-\varphi_4(|\g_{l}|)\rp^{\lfloor\frac{d}{2}\rfloor}e^{\frac{-\g_{l}^2}{2}} \nonumber \\
& &  \times \lp \lp \frac{1}{\sqrt{2\pi}}\rp^{l-1}\int_{|\g_{1:l-1}|\leq |\g_l|}\|\g_{1:l-1}\|_2^2e^{-\frac{\|\g_{1:l-1}\|_2^2}{2}}d\g_{1:l-1}
+\g_l^2 \varphi_4(|\g_{l}|)^{l-1}
\rp  d\g_l  \nonumber \\
& = &  l\binom{\lfloor\frac{d}{2}\rfloor+l}{l} \frac{1}{\sqrt{2\pi}}\int_{\g_{l}}
\lp 1-\varphi_4(|\g_{l}|)\rp^{\lfloor\frac{d}{2}\rfloor}e^{\frac{-\g_{l}^2}{2}} \nonumber \\
& &  \times \lp \lp \frac{(l-1)\varphi_4(|\g_{l}|)^{l-2}}{\sqrt{2\pi}}\rp\int_{|\g_{1}|\leq |\g_l|}\|\g_{1}\|_2^2e^{-\frac{\g_{1}^2}{2}}d\g_{1}
+\g_l^2 \varphi_4(|\g_{l}|)^{l-1}
\rp  d\g_l \nonumber \\
& = &  l\binom{\lfloor\frac{d}{2}\rfloor+l}{l} \frac{1}{\sqrt{2\pi}}\int_{\g_{l}}
\lp 1-\varphi_4(|\g_{l}|)\rp^{\lfloor\frac{d}{2}\rfloor}e^{\frac{-\g_{l}^2}{2}}  \lp (l-1)\varphi_4(|\g_{l}|)^{l-2}\varphi_3(|\g_l|)
+\g_l^2 \varphi_4(|\g_{l}|)^{l-1}
\rp d\g_l .\nonumber \\
\end{eqnarray}
The first combinatorial pre-factor, $\binom{\lfloor\frac{d}{2}\rfloor+l}{l}$, accounts for the number of different smallest $l$ components  locations. The second  combinatorial pre-factor, $l$, accounts for the number of different choices for the location of the largest component  among the fixed smallest $l$ ones. As (\ref{eq:supp27}) matches (\ref{eq:34}), this completes the second portion of the proof related to computing $\varphi_1(l;d)$. Together with the first part, related to determining $\phi_0$, this then also completes the entire proof of Theorem \ref{theorem:thm2}.
\end{proof}
%\begin{proof}
%Presented in Appendix \ref{sec:app2}.\end{proof}

Despite heavy mathematics involved, the above theorem ultimately allows one to actually obtain the concrete values for $\hat{c}(d,\mbox{sign})$ for \emph{any} $d$. For the first few admissible $d$'s, the capacity upper bounds that the above theorem produces are shown explicitly in Table \ref{tab:tab1} together with the replica symmetry based predictions of \cite{EKTVZ92,BHS92} and the best known rigorous upper bounds of \cite{MitchDurb89}. The obtained results \emph{exactly} match the replica symmetry based predictions for any $d$ and for $d\leq 5$ make the very first, mathematically rigorous, progress in over 30 years. Also, as the results suggest, one may expect that adding more neurons can substantially improve the SPNN memorizing abilities.

\vspace{.1in}
 \noindent \underline{4) \textbf{\emph{Double checking the strong random duality:}}}  Finally, the last step of the RDT machinery assumes double checking if the strong random duality holds as well. Due to highly discrete nature of the \emph{sign} functions, the corresponding reversal considerations from \cite{StojnicRegRndDlt10} can not be applied and the strong random duality is not in place. This implies that the presented results are indeed strict memory capacity upper bounds.

%\begin{figure}[h]
%%\begin{minipage}[b]{.5\linewidth}
%\centering
%\centerline{\includegraphics[width=0.8\linewidth]{capnndfinitePAP}}
%%\end{minipage}
%%\begin{minipage}[b]{.5\linewidth}
%%\centering
%%\centerline{\epsfig{figure=finprerral08.eps,width=9cm,height=6.5cm}}
%%\end{minipage}
%\caption{Memory capacity upper bound as a function of the number of neurons, $d$, in the hidden layer; 3-layer sign perceptrons neural network (SPNN)}
%\label{fig:fig1}
%\end{figure}

%%%%%%%%%%%%%%%%%%%%%%%%%%%%%%%%%%%%%%%%%%%%%%%%%%%%%%%%%%%%%%%%%%%%%%%%%%%%%%%%%%%%%%%%%%%%%%%%%%%%%%
%%%%%%%%%%%%%%%%%%%%%%%%%%%%%%%%%%%%%%%%%%%%%%%%%%%%%%%%%%%%%%%%%%%%%%%%%%%%%%%%%%%%%%%%%%%%%%%%%%%%%%
%%%%%%%%%%%%%%%%%%%%%%%%%%%%%%%%%%%%%%%%%%%%%%%%%%%%%%%%%%%%%%%%%%%%%%%%%%%%%%%%%%%%%%%%%%%%%%%%%%%%%%
\section{Conclusion}
\label{sec:conc}
%%%%%%%%%%%%%%%%%%%%%%%%%%%%%%%%%%%%%%%%%%%%%%%%%%%%%%%%%%%%%%%%%%%%%%%%%%%%%%%%%%%%%%%%%%%%%%%%%%%%%%
%%%%%%%%%%%%%%%%%%%%%%%%%%%%%%%%%%%%%%%%%%%%%%%%%%%%%%%%%%%%%%%%%%%%%%%%%%%%%%%%%%%%%%%%%%%%%%%%%%%%%%
%%%%%%%%%%%%%%%%%%%%%%%%%%%%%%%%%%%%%%%%%%%%%%%%%%%%%%%%%%%%%%%%%%%%%%%%%%%%%%%%%%%%%%%%%%%%%%%%%%%%%%

In this paper we studied the memory capabilities of \emph{sign} perceptron neural networks (SPNNs) with treelike committee machine (TCM) architectures. While the abilities of the single perceptron neurons have been very \emph{precisely} characterized through a host of now classical works of pattern recognition, machine learning theory, high dimensional geometry, statistics, and information theory (see, e.g., \cite{Schlafli,Cover65,Winder,Winder61,Wendel62,Cameron60,Joseph60,BalVen87,Ven86,DT,StojnicISIT2010binary,DonTan09Univ,DTbern,Gar88,StojnicGardGen13,StojnicGardSphErr13}),
the corresponding network related results are much harder to obtain and are very scarce to find in the literature even on a \emph{qualitative/scaling} level. We here, in a way, bridge the gap and provide the first mathematically rigorous progress in over 30 years.

In particular, we show that one can utilize the Random Duality Theory (RDT) as a powerful mathematical engine to attack these types of problems. In particular, in a statistical context, we design a generic framework for a theoretical analysis of the memorization abilities of TCM SPNNs and demonstrate that the framework allows for the underlying analysis to be done on a \emph{\textbf{phase transition}} precision level for \textbf{\emph{any}} given (odd) number of the neurons in the hidden layer. Although the underlying mathematics is quite involved, one is eventually (perhaps surprisingly) left with a single integral functional that allows to obtain concrete numerical estimates for the memory capacity upper bounds. Moreover, the proven bounds precisely match the results obtained through the statistical physics replica theory analysis in \cite{EKTVZ92,BHS92}. For the number of hidden layer neurons $d\leq 5$, our results actually improve on the best known mathematically rigorous ones from \cite{MitchDurb89}, thereby making a first progress in studying this fundamental open NN problem in well over 30 years.

The range of possible extensions is rather wide. First, we here relied on the basic RDT principles. The more advanced ones related to partially lifted (see, e.g., \cite{StojnicGardSphNeg13,StojnicMoreSophHopBnds10}) and fully lifted RDTs can be applied as well (see, e.g., \cite{Stojnicflrdt23,Stojnichopflrdt23,Stojnicbinperflrdt23}). Second, we considered the \emph{sign} activation functions. Many others are of interest as well, ReLU, erf, quadratic, to name a few. The whole machinery can be applied to all of these as well. Finally, we considered the treelike committee machine (TCM) architecture. Various other architectures are possible and have been of interest throughout the  literature.  For example, the first next logical extension would be to the fully connected committee machines as well as various forms of parity machines. Again, the whole framework applies to these scenarios as well. Each of these specific problems will be discussed in separate papers.

%\%begin{table}
%  \caption{Sample table title}
%  \label{sample-table}
%  \centering
%  \begin{tabular}{lll}
%    \toprule
%    \multicolumn{2}{c}{Part}                   \\
%    \cmidrule(r){1-2}
%    Name     & Description     & Size ($\mu$m) \\
%    \midrule
%    Dendrite & Input terminal  & $\sim$100     \\
%    Axon     & Output terminal & $\sim$10      \\
%    Soma     & Cell body       & up to $10^6$  \\
%    \bottomrule
%  \end{tabular}
%\end{table}

%%%%%%%%%%%%%%%%%%%%%%%%%%%%%%%%%%%%%%%%%%%%%%%%%%%%%%%%%%%%%%%%%%%%%%%%%%%%%%%%%%%%%%%%%%%%%%%%%%%%%%%%%%%%%%%%%%
%%%%%%%%%%%%%%%%%%%%%%%%%%%%%%%%%%%%%%%%%%%%%%%%%%%%%%%%%%%%%%%%%%%%%%%%%%%%%%%%%%%%%%%%%%%%%%%%%%%%%%%%%%%%%%%%%%
%%%%%%%%%%%%%%%%%%%%%%%%%%%%%%%%%%%%%%%%%%%%%%%%%%%%%%%%%%%%%%%%%%%%%%%%%%%%%%%%%%%%%%%%%%%%%%%%%%%%%%%%%%%%%%%%%%
%\section*{References}
%%%%%%%%%%%%%%%%%%%%%%%%%%%%%%%%%%%%%%%%%%%%%%%%%%%%%%%%%%%%%%%%%%%%%%%%%%%%%%%%%%%%%%%%%%%%%%%%%%%%%%%%%%%%%%%%%%
%%%%%%%%%%%%%%%%%%%%%%%%%%%%%%%%%%%%%%%%%%%%%%%%%%%%%%%%%%%%%%%%%%%%%%%%%%%%%%%%%%%%%%%%%%%%%%%%%%%%%%%%%%%%%%%%%%
%%%%%%%%%%%%%%%%%%%%%%%%%%%%%%%%%%%%%%%%%%%%%%%%%%%%%%%%%%%%%%%%%%%%%%%%%%%%%%%%%%%%%%%%%%%%%%%%%%%%%%%%%%%%%%%%%%

%\newpage1
%\setcounter{page}{1}
\begin{singlespace}
\bibliographystyle{plain}
\bibliography{nflgscompyxRefs}

\begin{thebibliography}{10}

\bibitem{ADHLW19}
S.~Arora, S.~S. Du, W.~Hu, Z.~Li, and R.~Wang.
\newblock Fine-grained analysis of optimiza\-tion and generalization for
  overparameterized two-layer neural networks.
\newblock 2019.
\newblock available online at
  {\small\bl{\url{http://arxiv.org/abs/1901.08584}}}.

\bibitem{BalMalZech19}
C.~Baldassi, E.~M. Malatesta, and R.~Zecchina.
\newblock Properties of the geometry of solutions and capacity of multilayer
  neural networks with rectified linear unit activations.
\newblock {\em Phys. Rev. Lett.}, 123:170602, October 2019.

\bibitem{BalVen87}
P.~Baldi and S.~Venkatesh.
\newblock Number od stable points for spin-glasses and neural networks of
  higher orders.
\newblock {\em Phys. Rev. Letters}, 58(9):913--916, Mar. 1987.

\bibitem{BHK90}
E.~Barkai, D.~Hansel, and I.~Kanter.
\newblock Statistical mechanics of a multilayered neural network.
\newblock {\em Phys. Rev. Lett.}, 65(18):2312--2315, Oct 1990.

\bibitem{BHS92}
E.~Barkai, D.~Hansel, and H.~Sompolinsky.
\newblock Broken symmetries in multilayered perceptrons.
\newblock {\em Phys. Rev. A}, 45(6):4146, March 1992.

\bibitem{BarKan91}
E.~Barkai and I.~Kanter.
\newblock Storage capacity of a multilayer neural network with binary weights.
\newblock {\em Europhys. Lett.}, 14(2):107, 1991.

\bibitem{Baum88}
E.~B. Baum.
\newblock On the capabilities of multilayer perceptrons.
\newblock {\em Journal of complexity}, 4(3):193--215, 1988.

\bibitem{Cameron60}
S.~H. Cameron.
\newblock Tech-report 60-600.
\newblock {\em Proceedings of the bionics symposium}, pages 197--212, 1960.
\newblock Wright air development division, {D}ayton, {O}hio.

\bibitem{Cover65}
T.~Cover.
\newblock Geomretrical and statistical properties of systems of linear
  inequalities with applications in pattern recognition.
\newblock {\em IEEE Transactions on Electronic Computers}, (EC-14):326--334,
  1965.

\bibitem{DT}
D.~Donoho and J.~Tanner.
\newblock Neighborliness of randomly-projected simplices in high dimensions.
\newblock {\em Proc. National Academy of Sciences}, 102(27):9452--9457, 2005.

\bibitem{DonTan09Univ}
D.~Donoho and J.~Tanner.
\newblock Observed universality of phase transitions in high-dimensional
  geometry, with implications for modern data analysis and signal processing.
\newblock {\em Phylosophical transactions of the royal society A: mathematical,
  physical and engineering sciences}, 367, November 2009.

\bibitem{DTbern}
D.~Donoho and J.~Tanner.
\newblock Counting the face of randomly projected hypercubes and orthants, with
  application.
\newblock {\em Discrete and Computational Geometry}, 43:522--541, 2010.

\bibitem{DuZhaiPoc18}
S.~S. Du, X.~Zhai, B.~Poczos, and A.~Singh.
\newblock Gradient descent provably optimizes over\-parameterized neural
  networks.
\newblock 2018.
\newblock available online at
  {\small\bl{\url{http://arxiv.org/abs/1810.02054}}}.

\bibitem{EKTVZ92}
A.~Engel, H.~M. Kohler, F.~Tschepke, H.~Vollmayr, and A.~Zippelius.
\newblock Storage capacity and learning algorithms for two-layer neural
  networks.
\newblock {\em Phys. Rev. A}, 45(10):7590, May 1992.

\bibitem{MitchDurb89}
R.~M.~Durbin G.~J.~Mitchison.
\newblock Bounds on the learning capacity of some multi-layer networks.
\newblock {\em Biological Cybernetics}, 60:345--365, 1989.

\bibitem{Gar88}
E.~Gardner.
\newblock The space of interactions in neural networks models.
\newblock {\em J. Phys. A: Math. Gen.}, 21:257--270, 1988.

\bibitem{GeWangZhao19}
R.~Ge, R.~Wang, and H.~Zhao.
\newblock Mildly overparametrized neural nets can memorize training data
  efficiently.
\newblock 2019.
\newblock available online at
  {\small\bl{\url{http://arxiv.org/abs/1909.11837}}}.

\bibitem{Gordon88}
Y.~Gordon.
\newblock On {M}ilman's inequality and random subspaces which escape through a
  mesh in ${R}^n$.
\newblock {\em Geometric Aspect of of functional analysis, Isr. Semin. 1986-87,
  Lect. Notes Math}, 1317, 1988.

\bibitem{HardrtMa16}
M.~Hardt and T.~Ma.
\newblock Identity matters in deep learning.
\newblock 2016.
\newblock available online at
  {\small\bl{\url{http://arxiv.org/abs/1611.04231}}}.

\bibitem{GBHuang03}
G.~B. Huang.
\newblock Learning capability and storage capacity of two-hidden-layer
  feedforward networks.
\newblock {\em IEEE Transactions on Neural Networks}, 14(2):274--281, 2003.

\bibitem{JiTel19}
Z.~Ji and M.~Telgarsky.
\newblock Polylogarithmic width suffices for gradient descent to achieve
  arbitrarily small test error with shallow relu networks.
\newblock 2019.
\newblock available online at
  {\small\bl{\url{http://arxiv.org/abs/1909.12292}}}.

\bibitem{Joseph60}
R.~D. Joseph.
\newblock The number of orthants in $n$-space instersected by an
  $s$-dimensional subspace.
\newblock {\em Tech. memo 8, project {PARA}}, 1960.
\newblock Cornel aeronautical lab., Buffalo, N.Y.

\bibitem{LiLiang18}
Y.~Li and Y.~Liang.
\newblock Learning overparameterized neural networks via stochastic gradient
  descent on structured data.
\newblock {\em In Advances in Neural Information Processing Systems}, pages
  8157--8166, 2018.

\bibitem{MonZech95}
R.~Monasson and R.~Zecchina.
\newblock Weight space structure and internal representations: A direct
  approach to learning and generalization in multilayer neural networks.
\newblock {\em Phys. Rev. Lett.}, 75:2432, September 1995.

\bibitem{OymSol19}
S.~Oymak and M.~Soltanolkotabi.
\newblock Towards moderate overparameterization: global convergence guarantees
  for training shallow neural networks.
\newblock 2019.
\newblock available online at
  {\small\bl{\url{http://arxiv.org/abs/1902.04674}}}.

\bibitem{Schlafli}
L.~Schlafli.
\newblock {\em Gesammelte Mathematische AbhandLungen I}.
\newblock Basel, Switzerland: Verlag Birkhauser, 1950.

\bibitem{SongYang19}
Z.~Song and X.~Yang.
\newblock Quadratic suffices for over-parametrization via matrix {C}hernoff
  bound.
\newblock 2019.
\newblock available online at
  {\small\bl{\url{http://arxiv.org/abs/1906.03593}}}.

\bibitem{StojnicCSetamBlock09}
M.~Stojnic.
\newblock Block-length dependent thresholds in block-sparse compressed sensing.
\newblock available online at \bl{\url{http://arxiv.org/abs/0907.3679}}.

\bibitem{StojnicCSetam09}
M.~Stojnic.
\newblock Various thresholds for $\ell_1$-optimization in compressed sensing.
\newblock available online at \bl{\url{http://arxiv.org/abs/0907.3666}}.

\bibitem{StojnicICASSP10block}
M.~Stojnic.
\newblock Block-length dependent thresholds for $\ell_2/\ell_1$-optimization in
  block-sparse compressed sensing.
\newblock {\em ICASSP, IEEE International Conference on Acoustics, Signal and
  Speech Processing}, pages 3918--3921, 14-19 March 2010.
\newblock Dallas, TX.

\bibitem{StojnicICASSP10var}
M.~Stojnic.
\newblock $\ell_1$ optimization and its various thresholds in compressed
  sensing.
\newblock {\em ICASSP, IEEE International Conference on Acoustics, Signal and
  Speech Processing}, pages 3910--3913, 14-19 March 2010.
\newblock Dallas, TX.

\bibitem{StojnicISIT2010binary}
M.~Stojnic.
\newblock Recovery thresholds for $\ell_1$ optimization in binary compressed
  sensing.
\newblock {\em ISIT, IEEE International Symposium on Information Theory}, pages
  1593 -- 1597, 13-18 June 2010.
\newblock Austin, TX.

\bibitem{StojnicGardGen13}
M.~Stojnic.
\newblock Another look at the {G}ardner problem.
\newblock 2013.
\newblock available online at \bl{\url{http://arxiv.org/abs/1306.3979}}.

\bibitem{StojnicMoreSophHopBnds10}
M.~Stojnic.
\newblock Lifting/lowering {H}opfield models ground state energies.
\newblock 2013.
\newblock available online at \bl{\url{http://arxiv.org/abs/1306.3975}}.

\bibitem{StojnicGardSphNeg13}
M.~Stojnic.
\newblock Negative spherical perceptron.
\newblock 2013.
\newblock available online at \bl{\url{http://arxiv.org/abs/1306.3980}}.

\bibitem{StojnicRegRndDlt10}
M.~Stojnic.
\newblock Regularly random duality.
\newblock 2013.
\newblock available online at \bl{\url{http://arxiv.org/abs/1303.7295}}.

\bibitem{StojnicGardSphErr13}
M.~Stojnic.
\newblock Spherical perceptron as a storage memory with limited errors.
\newblock 2013.
\newblock available online at \bl{\url{http://arxiv.org/abs/1306.3809}}.

\bibitem{Stojnicgscompyx16}
M.~Stojnic.
\newblock Fully bilinear generic and lifted random processes comparisons.
\newblock 2016.
\newblock available online at \bl{\url{http://arxiv.org/abs/1612.08516}}.

\bibitem{Stojnicgscomp16}
M.~Stojnic.
\newblock Generic and lifted probabilistic comparisons -- max replaces minmax.
\newblock 2016.
\newblock available online at \bl{\url{http://arxiv.org/abs/1612.08506}}.

\bibitem{Stojnicbinperflrdt23}
M.~Stojnic.
\newblock Binary perceptrons capacity via fully lifted random duality theory.
\newblock 2023.
\newblock available online at arxiv.

\bibitem{Stojnicflrdt23}
M.~Stojnic.
\newblock Fully lifted random duality theory.
\newblock 2023.
\newblock available online at arxiv.

\bibitem{Stojnichopflrdt23}
M.~Stojnic.
\newblock Studying {H}opfield models via fully lifted random duality theory.
\newblock 2023.
\newblock available online at arxiv.

\bibitem{RuoyuSun19}
R.~Sun.
\newblock Optimization for deep learning: theory and algorithms.
\newblock 2019.
\newblock available online at
  {\small\bl{\url{http://arxiv.org/abs/1912.08957}}}.

\bibitem{Urban97}
R~Urbanczik.
\newblock Storage capacity of the fully-connected committee machine.
\newblock {\em J. Phys. A: Math. Gen.}, 30, 1997.

\bibitem{Ven86}
S.~Venkatesh.
\newblock Epsilon capacity of neural networks.
\newblock {\em Proc. Conf. on Neural Networks for Computing, Snowbird, UT},
  1986.

\bibitem{Vershynin20}
R.~Vershynin.
\newblock Memory capacity of neural networks with threshold and {R}e{LU}
  activations.
\newblock 2019.
\newblock available online at
  {\small\bl{\url{http://arxiv.org/abs/2001.06938}}}.

\bibitem{Wendel62}
J.~G. Wendel.
\newblock A problem in geometric probability.
\newblock {\em Mathematica Scandinavica}, 1:109--111, 1962.

\bibitem{Winder61}
R.~O. Winder.
\newblock Single stage threshold logic.
\newblock {\em Switching circuit theory and logical design}, pages 321--332,
  Sep. 1961.
\newblock AIEE Special publications S-134.

\bibitem{Winder}
R.~O. Winder.
\newblock {\em Threshold logic}.
\newblock Ph. D. dissertation, Princetoin University, 1962.

\bibitem{XiongKwonOh97}
Y.~Xiong and J.~H.~Oh C.~Kwon.
\newblock The storage capacity of a fully-connected committee machine.
\newblock {\em NIPS}, 1997.

\bibitem{Yama93}
M.~Yamasaki.
\newblock The lower bound of the capacity for a neural network with multiple
  hidden layers.
\newblock {\em In International Conference on Artificial Neural Networks},
  pages 546--549, 1993.

\bibitem{YunSuJad19}
C.~Yun, S.~Sra, and A.~Jadbabaie.
\newblock Small relu networks are powerful memorizers: a tight analysis of
  memorization capacity.
\newblock {\em In Advances in Neural Information Processing Systems}, pages
  15532--15543, 2019.

\bibitem{ZavPeh21}
J.~A. Zavatone-Veth and C.~Pehlevan.
\newblock Activation function dependence of the storage capacity of treelike
  neural networks.
\newblock {\em Phys. Rev. E}, 103:L020301, February 2021.

\bibitem{ZBHRV17}
C.~Zhang, S.~Bengio, M.~Hardt, B.~Recht, and O.~Vinyals.
\newblock Understanding deep learning requires rethinking generalization.
\newblock {\em ICLR}, 2017.

\bibitem{ZCZG18}
D.~Zou, Y.~Cao, D.~Zhou, and Q.~Gu.
\newblock Stochastic gradient descent optimizes over\-parameterized deep relu
  networks.
\newblock 2018.
\newblock available online at
  {\small\bl{\url{http://arxiv.org/abs/1811.08888}}}.

\end{thebibliography}
\end{singlespace}

\end{document}